\documentclass[preprint,1p,12pt]{elsarticle}

\usepackage{xcolor,tikz}
\usetikzlibrary{datavisualization}
\usetikzlibrary{shapes}
\usepackage{hyperref,soul}
\usepackage{amssymb,amsmath}
\usepackage[latin1]{inputenc}
\usepackage{graphicx,color}
\usepackage{cancel}
\usepackage{nicefrac}

\newcommand{\fcon}[2]{\langle #1,#2\rangle}

\newcommand{\uimp}{\nwarrow}
\newcommand{\dimp}{\swarrow}
\newcommand{\adjoint}{\mathop{\&}\nolimits}

\newcommand{\Pp}{\text{P}}

\newcommand{\G}{\text{G}}
\newcommand{\Lu}{\text{\L}}

\newcommand{\cM}{\mathcal{M}}
\newcommand{\cC}{\mathcal{C}}
\newcommand{\cL}{\mathcal{L}}

\newcommand{\up}{\uparrow}
\newcommand{\down}{\downarrow}
\newcommand{\n}{\text{n}}
\newcommand{\nsp}[1]{\ensuremath{\mkern-#1mu}}

\newcommand{\downs}[1]{\mathop{\downarrow{\! #1}}\nolimits}
\newcommand{\ups}[1]{\mathop{\uparrow{\! #1}}\nolimits}

\usepackage{pgfplots}
\pgfplotsset{compat=newest}

\makeatletter
 \pgfplotsset{
    xtick parsed/.code={
        \c@pgf@counta 0\relax
        \foreach \x in {#1} {
            \pgfmathparse{\x}
            \ifnum\c@pgf@counta=0
                \xdef\pgfplots@xtick{\pgfmathresult}
            \else
                \xdef\pgfplots@xtick{\pgfplots@xtick,\pgfmathresult}
            \fi
            \global\advance\c@pgf@counta 1\relax
        }
    }
 } 
 
 \pgfplotsset{
    ytick parsed/.code={
        \c@pgf@counta 0\relax
        \foreach \x in {#1} {
            \pgfmathparse{\x}
            \ifnum\c@pgf@counta=0
                \xdef\pgfplots@ytick{\pgfmathresult}
            \else
                \xdef\pgfplots@ytick{\pgfplots@xtick,\pgfmathresult}
            \fi
            \global\advance\c@pgf@counta 1\relax
        }
    }
 } 
 \makeatletter
\def\@author#1{\g@addto@macro\elsauthors{\normalsize%
    \def\baselinestretch{1}%
    \upshape\authorsep#1\unskip\textsuperscript{%
      \ifx\@fnmark\@empty\else\unskip\sep\@fnmark\let\sep=,\fi
      \ifx\@corref\@empty\else\unskip\sep\@corref\let\sep=,\fi
      }%
    \def\authorsep{\unskip,\space}%
    \global\let\@fnmark\@empty
    \global\let\@corref\@empty  
    \global\let\sep\@empty}%
    \@eadauthor={#1}
}
\makeatother

\journal{Fuzzy sets and systems}

\newtheorem{theorem}{Theorem}
\newtheorem{corollary}[theorem]{Corollary}
\newtheorem{lemma}[theorem]{Lemma}
\newtheorem{proposition}[theorem]{Proposition}
\newdefinition{definition}[theorem]{Definition }
\newdefinition{remark}[theorem]{Remark }
\newdefinition{example}[theorem]{Example }
\newproof{proof}{Proof}

\begin{document}

\begin{frontmatter}

\title{
Independent subcontexts and blocks of concept lattices. Definitions and relationships to decompose fuzzy contexts\tnoteref{t1}}
\tnotetext[t1]{Partially supported by the project  PID2022-137620NB-I00 funded by MICIU/AEI/10.13039/501100011033 and FEDER, UE, by the grant TED2021-129748B-I00 funded by MCIN/AEI/10.13039/501100011033 and European Union NextGenerationEU/PRTR, and by the project PR2023-009 funded by the University of C\'adiz.}

\author
{Roberto G. Arag\'on\corref{t2}}
\cortext[t2]{Corresponding author}
\ead{roberto.aragon@uca.es}
\author
{Jes\'us Medina}
\ead{jesus.medina@uca.es}
\author
{Elo\'isa Ram\'irez-Poussa}
\ead{eloisa.ramirez@uca.es}

\address
{Department of Mathematics,
 University of  C\'adiz. Spain
}

\begin{abstract}
The decomposition of datasets is  a useful   mechanism in the processing  of large datasets and it is required in many cases. 
In formal concept analysis (FCA){,} the dataset is interpreted as a context and 
the notion of independent context is relevant in  the decomposition of a context.
In this paper, we have introduced a formal definition of independent context {within the 
multi-adjoint concept lattice framework,  which can be translated to other fuzzy approaches.} 
Furthermore, we have analyzed the decomposition of a general bounded lattice in pieces, that we have called  blocks. 
This decomposition of a lattice has been related to the existence of a decomposition of a context into independent subcontexts.
This study will allow to develop algorithms to decompose datasets  with imperfect information.
\end{abstract}

\begin{keyword}
Formal concept analysis, multi-adjoint framework, independent subcontext, block of elements
\end{keyword}

\end{frontmatter}

\section{Introduction}

The processing of large amounts of data is a challenge that continues {to be} a leading research topic in recent times. One of the strategies to tackle knowledge extraction from relational databases is its factorization/decomposition~\cite{BARTL23,Gugliermo23,Kome23,KOYDA202370,OLIVEIRA22}. The ability to decompose a dataset enables the reduction of the complexity of information processing, thereby facilitating the more efficient solution of the problem at hand~\cite{BELOHLAVEK2015,Chi2023}. In addition, two further fundamental aspects can be identified. Firstly, the extracted factors reveal valuable knowledge regarding the entirety of the dataset, and secondly, these factors can be regarded as new variables, which had initially been obscured within the data and have now been exposed by the decomposition.

Formal Concept Analysis (FCA, for short) is a mathematical theory, introduced in the eighties~\cite{GanterW,wille:1982}, in which different tools are developed in order to gather information from relational datasets, and enabling the representation of the extracted knowledge in terms of the algebraic structure of a complete lattice~\cite{OJEDAHERNANDEZ23,singh18,singh19,Zhang10057970}. 
Different approaches can be found in order to extend FCA to a fuzzy environment~\cite{ludomir14,belohlavek:1998,burusco:1998,KRIDLO2023119498}.
Amongst these approaches, the multi-adjoint framework is one of the most flexible~\cite{TFS:2020-acmr,ar:ins:2015,ins2018:cmr,ins-medina, mor-fss-cmpi}, offering a set of features that facilitate the modeling of complex real-world problems.
Recently, following the same philosophy presented in~\cite{dubois:2012}, in~\cite{aragonCCIS22,COAMfac2024} the authors introduced a study on the properties that these independent subcontexts satisfy in the concept lattice associated with the original context. Moreover, they analyze the extension of these properties to a fuzzy environment, providing a first step to {finding} a relationship between independent subcontexts and blocks (or intervals) of concepts of the concept lattice. Although the notions of either independent subcontexts and blocks of concepts are intuitive, a formal definition is still required. The introduction of these definitions is  a key milestone since will lay the foundations to develop procedures to decompose formal contexts within a fuzzy framework.

One of the most relevant lines of research within FCA, where several methods have already been proposed, is the decomposition of contexts~\cite{BELOHLAVEK201836,belohlavek2019,Kumar15MCS,dubois:2012,KRIDLO22IS,TRNECKA201875}. In~\cite{dubois:2012}, the authors proposed a mechanism based on modal operators to extract subtables, denominated as independent subcontexts, from Boolean tables. 
Therefore, in this paper, we formally introduce the notion of block of elements of a general bounded lattice and study different properties concerning to this notion, with the purpose of applying all the obtained results to the theory of FCA. In addition, we introduce the formal definition of independent subcontext of a formal context within the fuzzy setting provided by the multi-adjoint paradigm.  
We also analyze some properties that relate independent subcontexts  of a given context to blocks of concepts of the corresponding multi-adjoint concept lattice, and vice versa. 
As a consequence of this study, we provide a characterization of the contexts that contain independent subcontexts by means of blocks of the associated concept lattice. This final result relates ``independent'' parts of the original dataset  with an algebraic substructure inside the associated concept lattice. This algebraic point of view will  facilitate the design of automatic algorithms to decompose datasets with imperfect information. We also include several examples to illustrate the notions and results introduced in this paper. 

The organization of this paper is as follows: Section~\ref{preliminares} recalls some preliminary notions, several results and fixes   the algebraic structure that will be used in this study. In Section~\ref{sec.blocks}, the notion of block of {elements} of a lattice is introduced together with different properties that it satisfies. The notion of independent subcontext in a multi-adjoint framework is presented in Section~\ref{sec.Indep-subc}. Furthermore, the connections between independent subcontexts and blocks of concepts are analyzed in-depth in Section~\ref{sec.blocks-sci}.
Lastly, we summarize our conclusions and present our prospects for future research in Section~\ref{conclusion}.

\section{Preliminaries}\label{preliminares}

In this section, several necessary notions and results related to the fuzzy generalization of FCA given by the multi-adjoint framework~\cite{mor-fss-cmpi} are recalled. The basic operators considered in this environment are the adjoint triples~\cite{fss:cmr:2013}, whose definition is included below.

\begin{definition}\label{def:adjoint}
Let $(P_1,\leq_1)$, $(P_2,\leq_2)$, $(P_3,\leq_3)$ be posets and
$\adjoint\colon P_1\times P_2 \to P_3$,
$\dimp\colon P_3 \times P_2 \to P_1$,
$\uimp\colon P_3 \times P_1 \to P_2$ be
mappings, then $(\adjoint,\dimp,\uimp)$  is an
\emph{adjoint triple} with respect to $P_1, P_2, P_3$ if:
\begin{equation}\label{ap}
x\leq_1 z  \dimp y \quad \!\!\hbox{iff} \!\!\quad x \adjoint y \leq_3 z    \quad\!\!  \hbox{iff} \!\!\quad  y\leq_2 z \uimp  x
\end{equation}
  where $x\in P_1$, $y\in P_2$ and $z\in P_3$. Condition~\eqref{ap}  is called \emph{adjoint property}.
\end{definition}

The following result states some properties of adjoint triples that will be used in this paper.

\begin{proposition}[\cite{ija-cmr15}]\label{prop:atproperties}
Let $(\adjoint,\swarrow,\nwarrow)$ be an adjoint triple with respect to   
 three posets $(P_1,\leq_1)$, $(P_2,\leq_2)$ and $(P_3,\leq_3)$, then the following properties are satisfied:
\begin{enumerate}
\item $\adjoint$ is order-preserving on both arguments.
\item $\swarrow$ and $\nwarrow$ are order-preserving on the first argument and order-reversing on the second argument.
\item $\bot_1\adjoint y=\bot_3$,  $\top_3\swarrow y=\top_1$, for all $y\in P_2$, when $(P_1,\leq_1,\bot_1,\top_1)$ and $(P_3,\leq_3,\bot_3,\top_3)$ are bounded posets.
\item $x\adjoint \bot_2=\bot_3$ and $\top_3\nwarrow x=\top_2$, for all $x\in P_1$, when $(P_2,\leq_2,\bot_2,\top_2)$ and $(P_3,\leq_3,\bot_3,\top_3)$ are bounded posets.
\item $z\nwarrow\bot_1=\top_2$ and $z\swarrow\bot_2=\top_1$, for all $z\in P_3$, when $(P_1,\leq_1,\bot_1,\top_1)$ and $(P_2,\leq_2,\bot_2,\top_2)$ are bounded posets.
\item $z\swarrow y =\max\{x\in P_1\mid x\adjoint y\leq_3 z\}$, for all $y\in P_2$ and $z\in P_3$.
\item $z\nwarrow x =\max\{y\in P_2\mid x\adjoint y\leq_3 z\}$, for all $x\in P_1$ and $z\in P_3$.
\end{enumerate}
\end{proposition}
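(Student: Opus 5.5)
All seven items follow from the adjoint property~\eqref{ap} alone, and I will derive them in the order listed, since the later ones reuse the earlier ones. The common device is a Galois-connection argument: for fixed $y\in P_2$, the property $x\leq_1 z\dimp y \iff x\adjoint y\leq_3 z$ says that $(\_\adjoint y, \_\dimp y)$ is a pair of adjoint maps between $P_1$ and $P_3$. Likewise, for fixed $x\in P_1$, $(x\adjoint\_, \_\uimp x)$ is an adjoint pair between $P_2$ and $P_3$.

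\emph{Items 6 and 7.} I start with these, since they are immediate. Taking $x=z\dimp y$ in~\eqref{ap}, reflexivity of $\leq_1$ gives $(z\dimp y)\adjoint y\leq_3 z$, so $z\dimp y$ belongs to the set $\{x\mid x\adjoint y\leq_3 z\}$. Conversely, every $x$ in that set satisfies $x\leq_1 z\dimp y$, again by~\eqref{ap}. Hence $z\dimp y$ is the maximum of the set. Item 7 is the symmetric argument with $\uimp$.

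\emph{Items 1 and 2: monotonicity.} For $\adjoint$ in the first argument, let $x\leq_1 x'$. By item 6, $x'\leq_1 (x'\adjoint y)\dimp y$, so $x\leq_1 (x'\adjoint y)\dimp y$, and~\eqref{ap} yields $x\adjoint y\leq_3 x'\adjoint y$. The second argument is handled the same way using $\uimp$.

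For $\dimp$, order preservation in the first argument goes as follows. If $z\leq_3 z'$, then $(z\dimp y)\adjoint y\leq_3 z\leq_3 z'$, so $z\dimp y\leq_1 z'\dimp y$. For order reversal in the second argument, let $y\leq_2 y'$. By item 1, $(z\dimp y')\adjoint y\leq_3 (z\dimp y')\adjoint y'\leq_3 z$, and hence $z\dimp y'\leq_1 z\dimp y$. The case of $\uimp$ is symmetric.

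\emph{Items 3--5: boundary values.} Here I use that left adjoints preserve the least element and right adjoints preserve the greatest one.

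For item 3, note that $\bot_1\leq_1 \bot_3\dimp y$ always holds, so~\eqref{ap} gives $\bot_1\adjoint y\leq_3\bot_3$, that is, $\bot_1\adjoint y=\bot_3$. Also $x\adjoint y\leq_3\top_3$ holds for every $x$, so every $x$ satisfies $x\leq_1\top_3\dimp y$; taking $x=\top_1$ gives $\top_3\dimp y=\top_1$.

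Item 4 is the mirror image of item 3, using the $\uimp$ side of~\eqref{ap}.

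For item 5, I need $z\uimp\bot_1=\top_2$, which amounts to $\top_2\leq_2 z\uimp\bot_1$. By~\eqref{ap} this is equivalent to $\bot_1\adjoint\top_2\leq_3 z$. This holds because $\bot_1\adjoint\top_2=\bot_3$ by item 3, which applies here since $P_1$ and $P_3$ are bounded. The statement of item 5 only assumes that $P_1$ and $P_2$ are bounded, so the existence of $\bot_3$ is the one point that needs care. I would bypass it directly. For any $z$, $\bot_1\leq_1 z\dimp\top_2$, hence $\bot_1\adjoint\top_2\leq_3 z$ by~\eqref{ap}, and then $\top_2\leq_2 z\uimp\bot_1$. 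The identity $z\dimp\bot_2=\top_1$ follows symmetrically, starting from $\bot_2\leq_2 z\uimp\top_1$.

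This boundary bookkeeping in item 5 is the only mildly delicate step. Everything else is a routine application of~\eqref{ap}.
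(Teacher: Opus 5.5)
Your proof is correct: each of the seven items follows from the adjoint property as you derive it, and the paper gives no proof of its own (it imports the proposition from the cited reference), so your Galois-connection derivation is simply the standard argument. A side remark on item 5: the care about $\bot_3$ is not actually needed, because whenever $P_1$ has a least element we have $\bot_1\leq_1 z\dimp y$ for all $y,z$, so $\bot_1\adjoint y\leq_3 z$ for every $z$, and hence $P_3$ automatically has the least element $\bot_1\adjoint y$.
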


G\"odel, product and \L ukasiewicz t-norms together with their residuated  implications are some examples of adjoint triples that we will use in this work. It is convenient to note that, since these t-norms are commutative, their residuated implications coincide, that is,  $\dimp^{\G}=\uimp_{\G}$, $\dimp^{\Pp}=\uimp_{\Pp}$ and $\dimp^{\Lu}=\uimp_{\Lu}$~\cite{comparativeAMIS15}. 
Another important property  we will use in this paper is the possibility of a conjunctor has zero-divisors.

\begin{definition}
Given three lower bounded posets,  $(P_1,\leq_1,\bot_1)$, $(P_2,\leq_2,\bot_2)$, $(P_3,\leq_3,\bot_3)$,   an operator $\adjoint\colon P_1\times P_2\to P_3$ has \emph{zero-divisors}, if there exist at least two elements $x\in P_1\setminus\{\bot_1\}$ and $y\in P_2\setminus\{\bot_2\}$, such that $x\adjoint y=\bot_3$.
\end{definition}

On the other hand, in order to consider a formal context within this multi-adjoint framework, it is necessary to define an algebraic structure called multi-adjoint frame.

\begin{definition}  

A \emph{multi-adjoint frame} is a tuple $(L_1,L_2,P, \adjoint_1, \dots,\adjoint_n)$,\break
where $(L_1,\preceq_1, \bot_1, \top_1)$ and $(L_2,\preceq_2, \bot_2, \top_2)$ are complete lattices, $(P,\leq)$ is a poset and $(\adjoint_i,\dimp^i,\uimp_i)$ is an adjoint triple with respect to $L_1, L_2, P$,  for all~$i\in\{1, \dots, n\}$.
\end{definition}

{When $L_1=L_2=P$, we will simply write $(L, \adjoint_1, \dots,\adjoint_n)$.}
From a fixed multi-adjoint frame, a context  is defined as follows.

\begin{definition}
Given a multi-adjoint  frame $(L_1,L_2,P, \adjoint_1,\dots,\adjoint_n)$, a  \emph{context} is a tuple $(A,B,R,\sigma)$ such that
 $A$ and $B$ are  non-empty
sets (usually interpreted as attributes and objects, respectively), $R$ is a $P$-fuzzy relation $R \colon A\times B
\to P$ and $\sigma\colon A\times B\to \{1,\dots,n\}$ is a
mapping  which associates any element in $A\times B$ to some particular
adjoint triple of the frame. 
\end{definition}

When $P$ is bounded, that is, $(P,\leq,\bot_P)$, a context $(A,B,R,\sigma)$ will be called \emph{normalized} if for every attribute $a\in A$ there exist $b_1,b_2\in B$ such that $R(a,b_1)\neq \bot_P$ and $R(a,b_2)=\bot_P$ and for every object $b\in B$ there exist $a_1,a_2\in B$ such that $R(a_1,b)\neq \bot_P$ and $R(a_2,b)=\bot_P$.

The  fuzzy generalization of derivation operators ${}^{\uparrow} \colon L_2^B\to L_1^A$ and ${}^{\downarrow} \colon L_1^A \to L_2^B$  are given below:
 \begin{eqnarray*}\label{conexGmulti}
 g^{\uparrow} (a) &=&\inf \{  R(a,b)\dimp^{\sigma(a,b)}  g(b)\mid b\in B\}\\\label{conexGmulti2}
 f^{\downarrow} (b)  &=&\inf \{  R(a,b)\uimp_{\sigma(a,b)} f(a)\mid a\in A\}
\end{eqnarray*}
for all $g\in L_2^B$, $f\in L_1^A$ and $a\in A$, $b\in B$, where $L_2^B$ and $L_1^A$ denote the set of mappings $g\colon B\to L_2$ and $f\colon A\to L_1$, respectively.
In this environment, a {\em multi-adjoint concept} is a pair $\langle g, f\rangle$, where $g\in L_2^B $ is a fuzzy subset of objects and $f\in L_1^A $ is a fuzzy subset of attributes, satisfying that $g^{\uparrow} =f$ and $f^{\downarrow} =g$. Furthermore, the set of multi-adjoint concepts together with the usual ordering {forms} a complete lattice.

\begin{definition}
 The  \emph{multi-adjoint concept lattice} associated with a  multi-adjoint frame $(L_1,L_2,P, \adjoint_1,\dots,\adjoint_n)$ and a context $(A,B,R,\sigma)$ given, is the set
$$
\mathcal{M}=\{\langle g, f\rangle \mid  g\in L_2^B, f\in L_1^A
\hbox{ and } g^{\uparrow} =f, f^{\downarrow}=g\}
$$
where the
ordering is defined by $ \langle g_1, f_1\rangle\preceq \langle g_2, f_2\rangle
\hbox{ if and only if } g_1\preceq_2 g_2 $ (equivalently $f_2\preceq_1
f_1 $), for all $\langle g_1, f_1\rangle, \langle g_2, f_2\rangle\in \mathcal{M}$.
\end{definition}

In addition,  the fuzzy sets $g\in L_2^B $ and $f\in L_1^A $ such that $g(b) = \top_2$, for all $b\in B$, and $f(a)=\top_1$, for all $a\in A$, will be denoted as $g_\top$ and $f_\top$, respectively. Similarly, when  $g(b) = \bot_2$, for all $b\in B$, and $f(a)=\bot_1$, for all $a\in A$, they will be denoted as $g_\bot$ and $f_\bot$, respectively.

In the following definition, we recall the notion of meet-irreducible element of a lattice, which plays a key role in several results developed in this paper.

\begin{definition}\label{def:irred}
Given a lattice $(L,\preceq)$, such that $\wedge$ is the meet operator,  and an element $x\in L$  verifying
\begin{enumerate}
    \item If $L$ has a top element $\top$, then $x\neq \top$.
    \item If  $x= y\wedge z$, then $x=y$ or $x=z$, for all $y,z\in L$.
\end{enumerate}
$x$ is called \emph{meet-irreducible ($\wedge$-irreducible) element} of $L$. 
\end{definition}

{These elements can be seen as a generator system of the rest of elements of the lattice, when the ascending  chain condition holds~\cite{DaveyPriestley}. 

\begin{proposition}[\cite{DaveyPriestley}]\label{prop:acc}
Given a lattice $(L,\preceq)$, satisfying the ascending  chain condition, and the set of meet-irreducible elements $M(L)$, we have for each $x\in L$  that
$$
x=\bigwedge\{m\in L\mid m\in M(L), x\preceq m\}
$$
\end{proposition}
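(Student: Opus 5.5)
The plan is to prove the stronger statement that every $x\in L$ is the meet of a \emph{finite} set of meet-irreducible elements lying above $x$, and then deduce the displayed equality from it. This order matters because $L$ is not assumed complete, so the meet of the possibly infinite set $S_x=\{m\in M(L)\mid x\preceq m\}$ is not known to exist in advance. The finite version gives us both its existence and its value.

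First I would record the standard reformulation of the ascending chain condition: every non-empty subset of $L$ has a maximal element. If a non-empty subset had none, dependent choice would produce a strictly ascending infinite chain, contradicting the condition.

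Next, let $T$ be the set of elements $x\in L$ that are not equal to $\bigwedge F$ for any finite $F\subseteq M(L)$ with $x\preceq m$ for all $m\in F$. We use the convention that the empty meet is $\top$ when $L$ has a top. Suppose $T\neq\varnothing$ and pick a maximal element $x\in T$. Then $x$ is not $\top$, since $\top$ is the empty meet. Also $x$ is not meet-irreducible, since otherwise $x=\bigwedge\{x\}$. By Definition~\ref{def:irred}, there exist $y,z\in L$ with $x=y\wedge z$, $x\neq y$ and $x\neq z$, hence $x\prec y$ and $x\prec z$. By maximality of $x$ in $T$, we get $y,z\notin T$. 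So $y=\bigwedge F_y$ and $z=\bigwedge F_z$ for finite sets of meet-irreducibles above $y$ and $z$ respectively, and these elements are also above $x$. Then $x=\bigwedge(F_y\cup F_z)$, which contradicts $x\in T$. Hence $T=\varnothing$.

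Finally, fix $x$ and a finite $F\subseteq S_x$ with $x=\bigwedge F$. The element $x$ is a lower bound of $S_x$ by the definition of $S_x$. Any lower bound $w$ of $S_x$ is in particular a lower bound of $F$, so $w\preceq\bigwedge F=x$. Therefore $\bigwedge S_x$ exists and equals $x$, which is the claim.

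The only delicate point is the edge case in the second step. We need $\top$ to be excluded from $M(L)$ yet still be representable, which the empty-meet convention handles. If $L$ has no top element, then no maximal element of $T$ can be $\top$, so the case does not arise. I expect no other obstacle.
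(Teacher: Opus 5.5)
Your proof is correct. The paper does not prove this statement; it quotes it from Davey and Priestley, so the relevant comparison is with the standard textbook argument.

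That argument, the dual of the usual DCC/join-irreducible lemma, proves a separation property rather than a finite decomposition. Suppose $w\not\preceq x$, and take a maximal element $m$ of the non-empty set $\{y\in L\mid x\preceq y,\ w\not\preceq y\}$. Then $m\neq\top$. If $m=y\wedge z$ with $m\prec y$ and $m\prec z$, maximality forces $w\preceq y$ and $w\preceq z$, hence $w\preceq m$, which is absurd. So $m\in M(L)$. Consequently every lower bound $w$ of $S_x$ satisfies $w\preceq x$, and $x=\bigwedge S_x$ follows at once, again without any completeness assumption.

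Your maximal-counterexample induction gives the stronger conclusion that $x$ is already the meet of a \emph{finite} subset of $S_x$. The separation route is shorter and yields the pointwise fact that meet-irreducibles separate elements, which is useful in its own right. Both use ACC only through the existence of maximal elements.

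One minor remark: your final case distinction is vacuous. A lattice with ACC always has a top element, since a maximal element $t$ of $L$ satisfies $t\vee y=t$ for all $y$. So the empty-meet convention is always available.
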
}

The characterization of the meet-irreducible concepts of a multi-adjoint concept lattice given in~\cite{ar:ins:2015}, will be also used in this work. The following definition is necessary in order to recall the characterization.

\begin{definition}\label{fuzzy-attribute}
For each $a\in A$, the fuzzy subsets of attributes $\phi_{a,x}\in L_1^A$ defined, for all $x\in L_1$, as
$$ \phi_{a,x}(a') = \begin{cases}
 x& \hbox{if } a' = a\\
 \bot_1 &\hbox{if } a'\neq a
\end{cases}
$$
will be called \emph{fuzzy-attributes}. The set of all fuzzy-attributes will be denoted as $\Phi=\{\phi_{a,x}\mid a\in A, x\in L_1 \}$.
\end{definition}
Analogously, the fuzzy-objects are defined in the same way.

The characterization of the meet-irreducible concepts in the multi-adjoint framework is {showed} below.

\begin{theorem}[\cite{ar:ins:2015}]\label{th:and:irred}
The set of $\wedge$-irreducible elements of $\mathcal{M}$, $M_F(A, B, R, \sigma)$,  is:
$$
\left\{\langle \phi_{a,x}^\downarrow, \phi_{a,x}^{\downarrow  \uparrow} \rangle\mid   \phi_{a,x}^\downarrow \neq \bigwedge\{\phi_{a_i,x_i}^\downarrow \mid \phi_{a_i,x_i}\in \Phi, \phi_{a,x}^\downarrow \prec_2 \phi_{a_i,x_i}^\downarrow\} \hbox{ and } \phi_{a,x}^\downarrow\neq  g_\top\right\}
$$
\end{theorem}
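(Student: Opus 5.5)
The statement to be proved is the characterization in Theorem~\ref{th:and:irred}: the $\wedge$-irreducible concepts of $\mathcal{M}$ are exactly the concepts generated by single fuzzy-attributes $\phi_{a,x}$ whose extent is not $g_\top$ and is not the infimum of the strictly larger extents generated by fuzzy-attributes. The plan has three steps: first show that the fuzzy-attribute concepts form a $\wedge$-generating set of $\mathcal{M}$; then prove the two inclusions.

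\textbf{Step 1 (generating set).} Fix a concept $\langle g,f\rangle$. Write $f=\bigvee_{a\in A}\phi_{a,f(a)}$. The map ${}^\downarrow$ is built as an infimum of maps $R(a,b)\uimp_{\sigma(a,b)}f(a)$, each order-reversing in $f$, so it turns suprema into infima. Hence
$$g=f^\downarrow=\bigwedge_{a\in A}\phi_{a,f(a)}^\downarrow .$$
The ordering of $\mathcal{M}$ is the ordering of extents, and the meet of concepts has as extent the intersection of the extents. Therefore every concept is the meet of the concepts $\langle \phi_{a,f(a)}^\downarrow,\phi_{a,f(a)}^{\downarrow\uparrow}\rangle$, and each of these lies above $\langle g,f\rangle$.

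\textbf{Step 2 ($\subseteq$).} Let $\langle g,f\rangle$ be $\wedge$-irreducible. I expect the main obstacle here: the meet in Step 1 may be infinite, while Definition~\ref{def:irred} only speaks of binary meets. Binary irreducibility does not in general pass to infinite meets. I would try two routes.
\begin{itemize}
\item Read irreducibility in the complete sense, which the characterization implicitly uses.
\item Invoke the ascending chain condition through Proposition~\ref{prop:acc}, which covers the finite case.
\end{itemize}
Under either reading, $g$ equals some $\phi_{a,x}^\downarrow$. It is not $g_\top$, since the top concept is excluded by Definition~\ref{def:irred}. Suppose $g$ were the infimum of the strictly larger fuzzy-attribute extents. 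Then, by the same irreducibility argument, $g$ would equal one of them, a contradiction.

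\textbf{Step 3 ($\supseteq$).} Take $\phi_{a,x}$ with the two stated conditions, and suppose $\phi_{a,x}^\downarrow=g_1\wedge g_2$ with both $g_i$ strictly larger. By Step 1, each $g_i$ is an infimum of fuzzy-attribute extents lying above $g_i$, hence strictly above $\phi_{a,x}^\downarrow$. Then $\phi_{a,x}^\downarrow$ is at least the infimum in the theorem's condition, and at most it, because that infimum is taken over a subfamily of the extents whose meet gives $g_1\wedge g_2$. This forces equality with that infimum, contradicting the hypothesis. So $\langle\phi_{a,x}^\downarrow,\phi_{a,x}^{\downarrow\uparrow}\rangle$ is $\wedge$-irreducible, and the characterization follows.
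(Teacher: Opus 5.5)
\paragraph{Overall.} The paper only quotes this result, so there is no in-paper proof to compare with. Your plan is the natural one: the concepts $\langle\phi_{a,x}^\downarrow,\phi_{a,x}^{\downarrow\uparrow}\rangle$ are $\wedge$-dense, and you then locate the irreducibles inside that dense set. However, Step~2 has a genuine gap.

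\paragraph{The gap in Step~2.} Both applications of irreducibility there are to meets of families that can be infinite, whereas Definition~\ref{def:irred} only controls binary meets. The second family, $T=\{\phi_{a_i,x_i}^\downarrow\mid \phi_{a,x}^\downarrow\prec_2\phi_{a_i,x_i}^\downarrow\}$, is infinite as soon as $L_1$ is, even for finite $A$.

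Your ascending-chain-condition route does not repair this. Proposition~\ref{prop:acc} writes an element as a meet of irreducibles, but says nothing about an irreducible being one of the factors of an infinite meet. Nor does that condition make binary irreducibles completely irreducible. The complete chain $C=\{0\}\cup\{1/n\mid n\geq 1\}$ satisfies it, yet $0=\bigwedge_n 1/n$ is $\wedge$-irreducible. Realised as the concept lattice of the crisp context on $C\times C$ with $R(m,g)=1$ iff $g\leq m$, this chain even contradicts the theorem.

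In fact the literal statement is false in general. Take $L_1=L_2=P=[0,1]$ with the \L ukasiewicz triple, $A=\{a\}$, $B=\{b\}$ and $R(a,b)=0$. Then:
\begin{itemize}
\item $\phi_{a,x}^\downarrow(b)=1-x$ and every $g$ is an extent, so $\mathcal{M}$ is the chain $[0,1]$;
\item every non-top concept is $\wedge$-irreducible in the sense of Definition~\ref{def:irred};
\item yet the set in the theorem is empty, because $1-x=\inf\{1-x'\mid x'<x\}$ for all $x>0$.
\end{itemize}

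So you cannot argue ``under either reading''; you must commit to one corrected version.
\begin{itemize}
\item The theorem is about completely meet-irreducible concepts ($x\neq\top$, and $x=\bigwedge S$ implies $x\in S$). Then Step~2 goes through verbatim.
\item Or you assume finiteness, e.g.\ $A$ and $L_1$ finite. Then only finitely many distinct extents $\phi_{a,x}^\downarrow$ occur, and every meet in Step~2 is finite and nonempty ($T$ contains $g_\top=\phi_{a,\bot}^\downarrow$). Binary irreducibility then suffices by induction.
\end{itemize}

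\paragraph{Steps~1 and~3.} Both are sound, and Step~3 holds in full generality; it even yields complete irreducibility. Two justifications need repair.
\begin{itemize}
\item In Step~1, ``order-reversing, hence turns suprema into infima'' is not a valid inference. The identity $f^\downarrow=\bigwedge_{a\in A}\phi_{a,f(a)}^\downarrow$ instead follows directly from the definition of ${}^\downarrow$ and Lemma~\ref{lem.fuzzy-attr-obj}.
\item In Step~3 your inclusion is backwards. Write $S_i$ for the fuzzy-attribute extents whose meet is $g_i$. It is $S_1\cup S_2\subseteq T$ that gives $\bigwedge T\preceq_2\phi_{a,x}^\downarrow$. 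The reverse inequality $\phi_{a,x}^\downarrow\preceq_2\bigwedge T$ holds simply because $\phi_{a,x}^\downarrow$ is a lower bound of $T$.
\end{itemize}
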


The following technical result will be used in the proof of several results introduced in this paper.

\begin{lemma}[\cite{mor-fss-cmpi}]\label{lem.fuzzy-attr-obj}
Let $(L_1,L_2,P, \adjoint_i,\dots,\adjoint_n)$ be a multi-adjoint frame and $(A,B,R,\sigma)$ a context. Given $a\in A, b\in B, x\in L_1$ and $y\in L_2$, the following equalities hold:
\begin{itemize}
 \item $\phi_{a,x}^\down (b')= R(a,b')\uimp_{\sigma(a,b')} x \mbox{, for all } b'\in B$,
\item $\phi_{b,y}^\up (a') = R(a',b)\dimp^{\sigma(a',b)} y \mbox{, for all } a'\in A$.
\end{itemize}
\end{lemma}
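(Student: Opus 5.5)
By Proposition~\ref{prop:atproperties}, the implication $\uimp_{\sigma(a,b')}$ satisfies $z\uimp x=\max\{y\in L_2\mid x\adjoint y\leq z\}$. This turns each infimum in the definition of the derivation operators into a statement about a single coordinate. So the plan is to unfold the definition of $\phi_{a,x}^\downarrow(b')$ and show that every term of the infimum except the one indexed by $a$ equals $\top_2$.

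Fix $b'\in B$. By definition,
$$
\phi_{a,x}^\downarrow(b')=\inf\{R(a',b')\uimp_{\sigma(a',b')}\phi_{a,x}(a')\mid a'\in A\}.
$$
Split the index set into $a'=a$ and $a'\neq a$. For $a'=a$ the term is $R(a,b')\uimp_{\sigma(a,b')}x$, by Definition~\ref{fuzzy-attribute}. For $a'\neq a$ we have $\phi_{a,x}(a')=\bot_1$, so the term is $R(a',b')\uimp_{\sigma(a',b')}\bot_1$. Item~5 of Proposition~\ref{prop:atproperties} (with $L_1,L_2$ bounded, as they are complete lattices) gives $z\uimp\bot_1=\top_2$ for every $z\in P$. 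Hence all these terms are $\top_2$ and do not affect the infimum in $L_2$. The infimum therefore reduces to $R(a,b')\uimp_{\sigma(a,b')}x$, as claimed.

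The second equality is symmetric. We have
$$
\phi_{b,y}^\uparrow(a')=\inf\{R(a',b')\dimp^{\sigma(a',b')}\phi_{b,y}(b')\mid b'\in B\}.
$$
For $b'\neq b$ the terms are $R(a',b')\dimp\bot_2=\top_1$, again by item~5 of Proposition~\ref{prop:atproperties}. Only $R(a',b)\dimp^{\sigma(a',b)}y$ remains.

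The only delicate point is justifying $z\uimp\bot_1=\top_2$. Item~5 requires only bounded $P_1,P_2$, not bounded $P$. Directly, the bound holds for any $y'\in L_2$: $\bot_1\adjoint y'\le z$ by item~3 ($\bot_1\adjoint y'=\bot_3$ needs $P$ lower bounded). Alternatively, I would use monotonicity: since $\bot_1\le z\dimp y'$ always holds in $L_1$, the adjoint property yields $y'\le z\uimp\bot_1$ for every $y'$, hence $z\uimp\bot_1=\top_2$. This last argument needs no bound on $P$, so I would use it to keep the proof valid for an arbitrary poset $P$.
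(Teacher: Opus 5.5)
Your proof is correct. Unfolding the derivation operator, every term indexed by $a'\neq a$ (resp.\ $b'\neq b$) equals $\top_2$ (resp.\ $\top_1$), by item~5 of Proposition~\ref{prop:atproperties} or, as you note, directly from the adjoint property with no assumption on $P$, so the infimum collapses to the single remaining term; the paper states this lemma without proof and only cites it, and your computation is the natural argument behind it. The opening appeal to the $\max$ characterization of $\uimp$ is never used and can be dropped.
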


Another important result which has been used in this paper is the fundamental theorem for multi-adjoint concept lattices. In order to recall this result, it is necessary to introduce the following definition.

\begin{definition}\label{VR}
Let  $(L_1,L_2, P, \adjoint_i, \dots, \adjoint_n)$ be a multi-adjoint frame and $(A,B,R,\sigma)$ a context.
The multi-adjoint concept  lattice $(\cM,\preceq)$ is \emph{represented} by a complete lattice 
$(V,\sqsubseteq )$
if there exists a pair of mappings  \ $\alpha\colon A\times L_1 \to V$ and 
$\beta \colon B\times L_2 \to V$  such that:
 \begin{enumerate}
 \item[1a)] $\alpha[A\times L_1]$ is infimum-dense;
 \item[1b)] $\beta[B\times L_2]$ is supremum-dense; and
 \item[2)] For each $a\in A$, $b\in B$, $x\in L_1$ and $y \in L_2$:
$$
 \beta (b,y)\sqsubseteq  \alpha
(a,x) \quad \hbox{ if and only if } \quad x\adjoint_{\sigma(a,b)} y \leq R(a,b)
$$
 \end{enumerate}
\end{definition}

Once the previous notion has been included,
the fundamental theorem for multi-adjoint concept lattices is recalled.
\begin{theorem}[\cite{mor-fss-cmpi}]\label{fundamentalTh}
A complete lattice  $(V,\sqsubseteq )$ represents a multi-adjoint concept lattice $(\cM,\preceq)$  if and only if  $(V,\sqsubseteq )$ is isomorphic to $(\cM,\preceq)$.
\end{theorem}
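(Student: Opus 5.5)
The proof has two directions. The direction ``isomorphic $\Rightarrow$ represents'' is a direct verification. The converse will be proved by building mutually inverse order-preserving maps between $V$ and $\cM$ from the data $\alpha$, $\beta$.

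\emph{Isomorphic implies represents.} It suffices to show that $\cM$ itself is represented, since composing with an isomorphism $h\colon\cM\to V$ transports conditions 1a), 1b) and 2). I take
$$\alpha(a,x)=\langle \phi_{a,x}^{\downarrow},\phi_{a,x}^{\downarrow\uparrow}\rangle,\qquad \beta(b,y)=\langle \phi_{b,y}^{\uparrow\downarrow},\phi_{b,y}^{\uparrow}\rangle .$$
For infimum-density, take a concept $\langle g,f\rangle$. Then $f=\bigvee_{a}\phi_{a,f(a)}$, and ${}^\downarrow$ sends suprema to infima. Hence $g=f^\downarrow=\bigwedge_a\phi_{a,f(a)}^\downarrow$, so $\langle g,f\rangle=\bigwedge_a\alpha(a,f(a))$. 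Supremum-density follows dually from $g=\bigvee_b\phi_{b,g(b)}$.

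For condition 2), I use the Galois connection and Lemma~\ref{lem.fuzzy-attr-obj}. First, $\beta(b,y)\preceq\alpha(a,x)$ iff $\phi_{b,y}^{\uparrow\downarrow}\preceq_2\phi_{a,x}^\downarrow$. This holds iff $\phi_{b,y}\preceq_2\phi_{a,x}^\downarrow$, because the right-hand side is closed. That in turn holds iff $y\preceq_2 R(a,b)\uimp_{\sigma(a,b)}x$. By the adjoint property this is equivalent to $x\adjoint_{\sigma(a,b)}y\le R(a,b)$.

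\emph{Represents implies isomorphic.} Given $\alpha$ and $\beta$, I define for each $v\in V$
$$g_v(b)=\sup\{y\in L_2\mid \beta(b,y)\sqsubseteq v\},\qquad f_v(a)=\sup\{x\in L_1\mid v\sqsubseteq\alpha(a,x)\}.$$
I set $\psi(v)=\langle g_v,f_v\rangle$ and $\varphi(\langle g,f\rangle)=\bigvee\{\beta(b,y)\mid y\preceq_2 g(b)\}$.

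The key preliminary step is that these suprema are attained, i.e.\ $\beta(b,g_v(b))\sqsubseteq v$ and $v\sqsubseteq\alpha(a,f_v(a))$.
\begin{itemize}
\item For fixed $(a,x)$, condition 2) and the adjoint property give $\{y\mid\beta(b,y)\sqsubseteq\alpha(a,x)\}=\{y\mid y\preceq_2 R(a,b)\uimp_{\sigma(a,b)}x\}$. This is a principal down-set.
\item By infimum-density, $v=\bigwedge\{\alpha(a,x)\mid v\sqsubseteq\alpha(a,x)\}$. Hence $\{y\mid\beta(b,y)\sqsubseteq v\}$ is an intersection of principal down-sets of the complete lattice $L_2$.
\item Such an intersection is again principal, so it contains its supremum $g_v(b)$.
\item The dual argument, using supremum-density, handles $f_v(a)$.
\end{itemize}

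Next I check that $\psi(v)\in\cM$. By supremum-density, $v\sqsubseteq\alpha(a,x)$ iff $\beta(b,y)\sqsubseteq\alpha(a,x)$ for every pair with $\beta(b,y)\sqsubseteq v$. By attainment, this is equivalent to $x\adjoint_{\sigma(a,b)}g_v(b)\le R(a,b)$ for all $b$. That is, $x\preceq_1 R(a,b)\dimp^{\sigma(a,b)}g_v(b)$ for all $b$, i.e.\ $x\preceq_1 g_v^\uparrow(a)$. Thus $f_v=g_v^\uparrow$. Symmetrically, $g_v=f_v^\downarrow$.

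Finally, both $\psi$ and $\varphi$ are clearly order-preserving. Supremum-density gives $\varphi\psi(v)=v$. For the other composite, a concept $\langle g,f\rangle$ is determined by its extent, and $g_{\varphi(\langle g,f\rangle)}\succeq_2 g$. To obtain equality I use condition 2) against the intent $f$: $\varphi(\langle g,f\rangle)\sqsubseteq\alpha(a,f(a))$ for all $a$, which forces $g_{\varphi(\langle g,f\rangle)}\preceq_2 f^\downarrow=g$. Hence $\psi$ is an isomorphism.

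I expect the main obstacle to be the attainment step: it is the only place where infimum-density is genuinely used to control the sets defining $g_v$ and $f_v$, and all the later verifications depend on it.
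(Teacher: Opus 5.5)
Your proof is correct, including the attainment step (intersections of principal down-sets, obtained from infimum-density for $g_v$ and from supremum-density for $f_v$), which is what makes the fuzzy analogue of the classical construction work; the one point you leave implicit, $\varphi(\fcon{g}{f})\sqsubseteq\alpha(a,f(a))$, follows at once from $f(a)\adjoint_{\sigma(a,b)}g(b)\leq R(a,b)$, i.e.\ from $f=g^{\up}$. The paper gives no proof of its own, since it imports the theorem from \cite{mor-fss-cmpi}; your argument is the standard Ganter--Wille-style construction that the cited source adapts, and your easy direction uses exactly the maps $\alpha(a,x)=\fcon{\phi_{a,x}^{\down}}{\phi_{a,x}^{\down\up}}$ and $\beta(b,y)=\fcon{\phi_{b,y}^{\up\down}}{\phi_{b,y}^{\up}}$ that the paper invokes in the proof of Corollary~\ref{cor.threpresent}.
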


The following corollary is derived from the previous theorem.

\begin{corollary}\label{cor.threpresent}
Let  $(L_1,L_2, P, \adjoint_i, \dots, \adjoint_n)$ be a multi-adjoint frame and $(A,B,R,\sigma)$ a context. {Then}, for each $a\in A$, $b\in B$, $x\in L_1$ and $y \in L_2$, the following equivalence holds:
    $$
 \fcon{\phi_{b,y}^{\up\down}}{\phi_{b,y}^{\up}} \preceq \fcon{\phi_{a,x}^{\down}}{\phi_{a,x}^{\down\up}}
 \quad \hbox{ if and only if } \quad x\adjoint_{\sigma(a,b)} y \leq R(a,b)
$$
\end{corollary}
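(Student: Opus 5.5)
The natural route is to apply Theorem~\ref{fundamentalTh} to $(V,\sqsubseteq)=(\cM,\preceq)$ itself, which trivially is isomorphic to itself via the identity. By that theorem, $\cM$ then represents itself: there exist mappings $\alpha\colon A\times L_1\to\cM$ and $\beta\colon B\times L_2\to\cM$ satisfying 1a), 1b) and 2) of Definition~\ref{VR}. Used this way, however, the theorem only gives \emph{some} pair $(\alpha,\beta)$. To obtain the concrete equivalence stated in the corollary, I will instead take
$$
\alpha(a,x)=\fcon{\phi_{a,x}^{\down}}{\phi_{a,x}^{\down\up}},\qquad \beta(b,y)=\fcon{\phi_{b,y}^{\up\down}}{\phi_{b,y}^{\up}},
$$
which are the mappings used in the proof of the fundamental theorem in~\cite{mor-fss-cmpi}. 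Rather than appealing to that proof, I will verify condition 2) for these mappings directly, since that is all the statement asks.

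First, both pairs are concepts, because ${}^{\up}$ and ${}^{\down}$ form a Galois connection, so ${}^{\down\up\down}={}^{\down}$ and ${}^{\up\down\up}={}^{\up}$. By the definition of the order in $\cM$, the inequality $\beta(b,y)\preceq\alpha(a,x)$ is equivalent to $\phi_{b,y}^{\up\down}\preceq_2\phi_{a,x}^{\down}$. By the Galois connection this is equivalent to $\phi_{b,y}\preceq_2\phi_{a,x}^{\down}$, since $\phi_{b,y}\preceq_2\phi_{b,y}^{\up\down}$ gives one direction and antitonicity of ${}^{\up\down}$ composed appropriately gives the other.

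Since $\phi_{b,y}$ is $\bot_2$ outside $b$, the inequality $\phi_{b,y}\preceq_2\phi_{a,x}^{\down}$ reduces to the single condition $y\preceq_2\phi_{a,x}^{\down}(b)$. By Lemma~\ref{lem.fuzzy-attr-obj}, $\phi_{a,x}^{\down}(b)=R(a,b)\uimp_{\sigma(a,b)}x$. The adjoint property \eqref{ap} then turns $y\preceq_2 R(a,b)\uimp_{\sigma(a,b)}x$ into $x\adjoint_{\sigma(a,b)}y\leq R(a,b)$, which completes the chain of equivalences.

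The only delicate step is the Galois-connection equivalence $\phi_{b,y}^{\up\down}\preceq_2 g$ iff $\phi_{b,y}\preceq_2 g$ for an extent $g$. Here I will use the standard fact that $g\preceq_2 f^{\down}$ iff $f\preceq_1 g^{\up}$, which follows from the adjoint property applied pointwise. Everything else is definitional.
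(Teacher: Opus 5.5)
Your proof is correct, but it takes a different route from the paper. The paper's proof is a one-liner. It applies Theorem~\ref{fundamentalTh} with $(V,\sqsubseteq)=(\cM,\preceq)$ and the mappings $\alpha(a,x)=\fcon{\phi_{a,x}^{\down}}{\phi_{a,x}^{\down\up}}$, $\beta(b,y)=\fcon{\phi_{b,y}^{\up\down}}{\phi_{b,y}^{\up}}$, and reads off condition~2) of Definition~\ref{VR}.

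As you point out, the theorem as recalled only guarantees \emph{some} representing pair. The paper's argument therefore relies on the proof in~\cite{mor-fss-cmpi}, not just on the statement of the theorem, to know that these particular mappings satisfy condition~2).

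You avoid the fundamental theorem altogether. Your chain $\beta(b,y)\preceq\alpha(a,x)$ iff $\phi_{b,y}^{\up\down}\preceq_2\phi_{a,x}^{\down}$ iff $\phi_{b,y}\preceq_2\phi_{a,x}^{\down}$ iff $y\preceq_2 R(a,b)\uimp_{\sigma(a,b)}x$ iff $x\adjoint_{\sigma(a,b)}y\leq R(a,b)$ uses only three ingredients: the Galois connection between ${}^{\up}$ and ${}^{\down}$, Lemma~\ref{lem.fuzzy-attr-obj}, and the adjoint property. This makes the corollary checkable from what the paper itself states. It also shows that the density conditions 1a) and 1b) play no role here. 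The paper's version is shorter, but it cannot be verified without the external proof.

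One small wording slip: ${}^{\up\down}$ is monotone, not antitone. The backward direction of your second equivalence uses monotonicity of ${}^{\up\down}$ together with ${}^{\down\up\down}={}^{\down}$. Alternatively, you can use the fact $g\preceq_2 f^{\down}$ iff $f\preceq_1 g^{\up}$, which you state at the end: apply it twice and use ${}^{\up\down\up}={}^{\up}$.
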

\begin{proof}
The proof straightforwardly follows from  Theorem~\ref{fundamentalTh} considering the lattice  $(V,\sqsubseteq )=(\cM,\preceq)$ and the mappings
$\alpha\colon A\times L_1 \to \cM$, 
$\beta \colon B\times L_2 \to \cM$, defined as
$\alpha(a,x)=\fcon{\phi_{a,x}^{\down}}{\phi_{a,x}^{\down\up}}$, $ \beta (b,y)= \fcon{\phi_{b,y}^{\up\down}}{\phi_{b,y}^{\up}}$, 
 for all
 $a\in A$, $b\in B$, $x\in L_1$ and $y \in L_2$.\qed

\end{proof}

The following section will be devoted to study blocks of elements of  bounded lattices.

\section{Block of elements of a lattice}\label{sec.blocks}

In this section, we are going to formalize the notion of block of elements of a non-trivial  bounded lattice as well as different properties that this notion satisfies in order to apply them to concept lattices in FCA. Hence, in this paper a  bounded lattice   $(L,\preceq, \bot,\top)$ with at least three elements will be fixed.  First of all, we introduce the central notion of this section.

\begin{definition}\label{block}

A sublattice $K\subset L$  is called  a {\emph{block of elements}} of~$L$ if {$K\setminus \{\bot,\top\}\neq\varnothing$ and} 
 $(\uparrow k~\cup \downarrow k)\setminus \{\bot,\top\}\subseteq K$, for all $k\in K\setminus \{\bot,\top\}$, where $\uparrow k = \{x\in L \mid k\preceq x\}$ and $\downarrow k = \{x\in L \mid x\preceq k\}$.
\end{definition}

A block of elements of a lattice $L$ will be called a block of $L$, for short. 
Notice that,  given  a block $K$ of $L$, we have that $K\cup \{\bot,\top\}$ is both a proper ideal and a proper filter of $L$ in ordered set theory~\cite{DaveyPriestley}. Analogously, a subset of $L$ being a proper ideal and a proper filter is a block of $L$, due to $L$ {having} at least three elements. One important feature of the notion of block of a lattice is that it could not include the top and/or the bottom elements of the bounded lattice.
{
Next, we present two special kinds of blocks with a significant role in this paper.} 

\begin{definition}\label{b_minmax}
Let $K\subset L$ be  a block of $L$.  Then,
\begin{itemize}
 	\item $K$ is called a \emph{minimal block} of elements of $L$  if there is no block $K'$ of~$L$ such that $K'\subset K$.
	\item $K$ is called a \emph{complete block} of elements of $L$  if $\bot, \top\in K$.
\end{itemize}
	
\end{definition}
Notice that the notion of complete block does not imply the maximal notion with respect to the inclusion, we will illustrate this fact in the following example.
\begin{example}\label{ex1}
Let us consider a bounded lattice $(L,\preceq, \bot,\top)$ represented in Figure~\ref{fig:max-blocks}. We can define different blocks, for example, the sets $K_1 =\{ a,b\}$ and $K_2 =\{c\}$ are minimal blocks, and $K_3 = \{\bot,d,e,f,g,\top\}$ is a complete block.  It is easy to check that the complete block $K_3$ is not maximal with the inclusion, we can find other blocks which contain the complete block $K_3$. For instance, the set $K_4 = K_3\cup K_1$ is also a complete block as well as the set $K_5 = K_3\cup K_2$ and both contain the complete block $K_3$. Therefore, having a complete block does not imply maximality with respect to inclusion.

	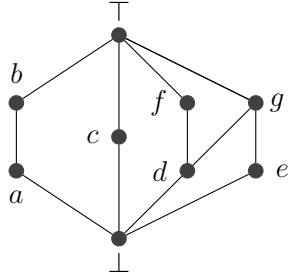
\begin{figure}[ht]
		\begin{center}
		\tikzstyle{place}=[circle,draw=black!75,fill=black!75]
		\begin{tikzpicture}[inner sep=0.75mm,scale=0.9, every node/.style={scale=0.9}]			
			\node at (0,0) (0) [place, label={[label distance=0.01cm]below:$\bot$}] {};
			\node at (-1.5,1) (1) [place, label={[label distance=0.1cm]below:$a$}] {};
			\node at (1,1) (2) [place, label={[label distance=0.1cm]left:$d$}] {};
			\node at (1,2) (3) [place, label={[label distance=0.1cm]left:$f$}] {};
			\node at (0,1.5) (4) [place, label={[label distance=0.1cm]left:$c$}] {};
			\node at (-1.5,2) (5) [place, label={[label distance=0.1cm]above:$b$}] {};
			\node at (2,1) (6) [place, label={[label distance=0.1cm]right:$e$}] {};
			\node at (2,2) (7) [place, label={[label distance=0.01cm]right:$g$}] {};
			\node at (0,3) (8) [place, label={[label distance=0.01cm]above:$\top$}] {};
			
			\draw [-] (0) -- (1) -- (5) -- (8) -- (4) -- (0) -- (2) -- (3) -- (8);
			\draw [-] (2) -- (7)--(8);
			\draw [-] (0) -- (6) -- (7) -- (8);

		\end{tikzpicture}
		\caption{Hasse diagram of the lattice $L$ of Example~\ref{ex1}.}\label{fig:max-blocks}
		\end{center}
	\end{figure}
	
In addition, note that $K_1\cup K_2$ is not a block since it is not a sublattice of $L$ and neither is $K_1\cup K_2 \cup K_3$, since a block must not be the {whole} lattice.

\end{example}

In addition, {as we commented above,} the union of blocks is not a block in general, {except when the considered blocks are complete and the union is not the whole lattice}.

\begin{proposition}\label{b_union_maximal}
Let $\{ K_i\}_{i\in I}$ be  a family  of blocks of $L$, with $I$ a non-empty  index set. If  there exists $j\in I$ such that $K_j$ is a complete block and $\bigcup_{i\in I}K_i \subset L$, then $\bigcup_{i\in I}K_i $ is a complete block.
\end{proposition}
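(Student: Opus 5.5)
Write $U=\bigcup_{i\in I}K_i$. We need four things: $U$ is a proper subset of $L$, $U$ contains $\bot$ and $\top$, $U$ is a sublattice, and $U$ satisfies the closure condition of Definition~\ref{block}. The first is part of the hypothesis. The second holds because $K_j\subseteq U$ and $K_j$ is complete. Also $U\setminus\{\bot,\top\}\neq\varnothing$, since already $K_j\setminus\{\bot,\top\}\neq\varnothing$.

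The closure condition is immediate. Take $k\in U\setminus\{\bot,\top\}$. Then $k\in K_i\setminus\{\bot,\top\}$ for some $i$, and since $K_i$ is a block,
$$(\uparrow k\cup\downarrow k)\setminus\{\bot,\top\}\subseteq K_i\subseteq U.$$

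The main step is showing that $U$ is a sublattice. Take $x,y\in U$, say $x\in K_i$ and $y\in K_l$. We must show that $x\wedge y$ and $x\vee y$ lie in $U$. First dispose of the cases involving the bounds.
\begin{itemize}
\item If $x$ or $y$ is $\bot$ or $\top$, then $x\wedge y$ and $x\vee y$ are each one of $x,y,\bot$, so they lie in $U$, because $\bot\in U$.
\item If $x\wedge y=\bot$, it lies in $U$ since $\bot\in K_j$.
\item If $x\vee y=\top$, it lies in $U$ since $\top\in K_j$.
\end{itemize}
In every remaining case $x\neq\bot,\top$, and the element $x\wedge y$ (or $x\vee y$) differs from $\bot$ and $\top$ and belongs to $\downarrow x$ (respectively $\uparrow x$). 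By the closure condition of $K_i$ at $x$, it belongs to $K_i\subseteq U$. So the two factors never need to lie in the same block: closure at $x$ alone puts both operations in $U$.

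The role of the complete block $K_j$ is exactly to absorb the cases where the meet collapses to $\bot$ or the join rises to $\top$. This is the only delicate point, and it explains why an arbitrary union of blocks fails to be a block, as in Example~\ref{ex1} with $K_1\cup K_2$. Combining the four points, $U$ is a block containing $\bot$ and $\top$, hence a complete block.
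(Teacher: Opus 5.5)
Your proof is correct and follows essentially the same route as the paper. The closure condition of each $K_i$ gives the up/down condition for $U$, and the sublattice property comes from $x\wedge y\in\downarrow x$ and $x\vee y\in\uparrow x$, with the complete block $K_j$ supplying $\bot$ and $\top$. Your separate treatment of the cases where $x$ or $y$ is a bound, and your check that $U\setminus\{\bot,\top\}\neq\varnothing$, make your version slightly more careful than the paper's. The paper's one-line argument writes $\downarrow k_1\subseteq\bigcup_{i\in I}K_i$ and $\uparrow k_1\subseteq\bigcup_{i\in I}K_i$, which is literally false when $k_1$ is $\top$ or $\bot$, respectively.
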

\begin{proof}
Let us consider  a family  of blocks {$\{ K_i\}_{i\in I}$} of $L$, with $I$ an index set{, 
such that $\bigcup_{i\in I}K_i \subset L$ and  there exists $j\in I$ where $K_j$ is a complete block.
Hence, this last hypothesis implies that
$\top,\bot\in \bigcup_{i\in I}K_i$.

Now, given }   $k\in \bigcup_{i\in I}K_i\setminus \{\bot,\top\}$,  there exists $j\in I$, such that $k\in K_j$. 
Since $K_j$ is a block, then   by Definition~\ref{block} we have that 
$$
(\uparrow k~\cup \downarrow k)\setminus \{\bot,\top\}\subset K_j
$$
 Therefore, we have that 
 $$
 (\uparrow k~\cup \downarrow k)\setminus \{\bot,\top\}\subseteq \bigcup_{i\in I}K_i
 $$

{In addition,  due to $\bigcup_{i\in I}K_i \neq L$ by hypothesis, it only remains to prove that  $\bigcup_{i\in I}K_i$ is a sublattice.   
Given $k_1,k_2 \in \bigcup_{i\in I}K_i$, we clearly have that 
$$
k_1\wedge k_2 \in \downs k_1\subseteq \bigcup_{i\in I}K_i \mbox{ and } k_1\vee k_2\in   \ups k_1\subseteq \bigcup_{i\in I}K_i
$$
because of the properties of infimum and supremum, and that  $\bot$ and $\top$ are already in  $\bigcup_{i\in I}K_i$.}
{Thus,} $\bigcup_{i\in I}K_i$ is also a complete block. \qed
\end{proof}

We can find several blocks within a lattice. In particular, we are interested in finding families of blocks that have no elements in common except for the top and bottom elements of the bounded lattice.

\begin{definition}\label{b_independent}
We say that 
\begin{itemize}
	\item Two blocks $K_1$ and $K_2$ of $L$ are \emph{independent} if $(K_1\cap K_2)\subseteq \{\bot,\top\} $. 
	\item A family  of blocks  $\{ K_i\}_{i\in I}$ of $L$, with $I$ an index set, is  called  \emph{a family of independent blocks} of $L$, 
 {if the elements in the family are independent pairwise.}
	\item  $(L,\preceq, \bot,\top)$ is \emph{decomposed into independent blocks} if  there exists a family  of independent blocks  $\{ K_i\}_{i\in I}$ of $L$, such that $\bigcup_{i\in I}K_i = L$.
\end{itemize}
\end{definition}

Note that  the inclusion  $(K_1\cap K_2)\subseteq \{\bot,\top\} $ is equivalent to the equality  $(K_1\cap K_2)\setminus \{\bot,\top\} = \varnothing$, which will also be used in the proofs of some results.  
 
An interesting property about blocks is that the intersection of blocks is also a block, whenever the blocks considered are not independent blocks.

\begin{proposition}\label{lemma:indep_block}
Given two blocks {$K_1$ and $K_2$} of $L$,  we have that either $K_1\cap K_2$ is a block of $L$ or  they are independent blocks of $L$.
\end{proposition}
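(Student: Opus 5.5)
We argue by cases on whether $K_1$ and $K_2$ are independent. If $(K_1\cap K_2)\setminus\{\bot,\top\}=\varnothing$, then by Definition~\ref{b_independent} they are independent blocks and there is nothing to prove. So assume there is an element $k_0\in (K_1\cap K_2)\setminus\{\bot,\top\}$. We check directly that $K=K_1\cap K_2$ satisfies every requirement of Definition~\ref{block}.

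The first requirements are immediate. Since $k_0\in K\setminus\{\bot,\top\}$, we have $K\setminus\{\bot,\top\}\neq\varnothing$. Also $K\subseteq K_1\subset L$, so $K$ is a proper subset of $L$. $K$ is a sublattice because each $K_i$ is one: if $k,k'\in K$, then $k\wedge k'$ and $k\vee k'$ lie in both $K_1$ and $K_2$.

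The remaining condition is the up/down closure. Take any $k\in K\setminus\{\bot,\top\}$. Then $k\in K_1\setminus\{\bot,\top\}$, so $(\uparrow k\cup\downarrow k)\setminus\{\bot,\top\}\subseteq K_1$, and by the same argument this set is contained in $K_2$. Hence it is contained in $K_1\cap K_2=K$. So $K$ is a block of $L$.

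No step here is a real obstacle. The one point needing care is that the closure condition only has to hold for elements of $K\setminus\{\bot,\top\}$, and these are exactly the elements at which both $K_1$ and $K_2$ supply the closure. The non-emptiness condition is where the dichotomy in the statement comes from. If the blocks are independent, $K_1\cap K_2$ can fail to be a block, since it may be empty or contained in $\{\bot,\top\}$. This is why the statement is phrased as an alternative.
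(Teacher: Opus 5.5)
Your proof is correct and follows essentially the same route as the paper: assume the blocks are not independent, then verify that $K_1\cap K_2$ is a sublattice and satisfies the up/down closure condition by inheriting both properties from $K_1$ and $K_2$. You also check explicitly that $K_1\cap K_2$ is a proper subset of $L$, which the paper leaves implicit. You restrict the closure argument to elements of $K\setminus\{\bot,\top\}$, where the paper's write-up is looser about this.
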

\begin{proof}
Let us consider $K_1$ and $K_2$ two blocks of $L$, {and we will assume that they are not independent.} 
Hence,  {$(K_1\cap K_2 )\not\subseteq  \{\bot,\top\} $, and  we will prove that $P=K_1\cap K_2$ is a block of $L$. Due to the previous expression we only need to prove that $P$ is a sublattice of $L$ and that
 $(\uparrow p~\cup \downarrow p)\setminus \{\bot,\top\}\subseteq P$, for all $p\in P\setminus \{\bot,\top\}$. We will begin with this last property.}

Given $p\in P$, since $P\subseteq K_i$ and $P\subseteq K_j$, by Definition~\ref{block}, it is satisfied that ${(\uparrow p~\cup \downarrow p)}\setminus  \{\bot,\top\}\subseteq K_i$ and $(\uparrow p~\cup \downarrow p)\setminus \{\bot,\top\}\subseteq K_j$. Therefore, ${(\uparrow p~\cup \downarrow p)}\setminus \{\bot,\top\}\subseteq K_i\cap K_j = P$.

 It remains to {prove that} $P$ is a sublattice of $L$. Indeed, for every $p_1,p_2\in P$, we have that $p_1,p_2\in K_1$ and $p_1,p_2\in K_2$, and therefore, since $K_1$ and $K_2$ are sublattices, we have $p_1\vee p_2 \in K_1$ and $p_1\vee p_2 \in K_2$. Thus $p_1\vee p_2 \in K_1\cap K_2 = P$. Analogously, we have that $p_1\wedge p_2\in P$ and therefore, $P$ is a sublattice of $L$. Consequently, $K_1\cap K_2 = P$ is a block of~$L$.
 
 {Thus,} either $K_1$ and $K_2$ are independent blocks of $L$ or $K_1\cap K_2$ is a block of~$L$. \qed
\end{proof}

Proposition~\ref{lemma:indep_block} together with the notion of minimal block leads us to determine the relationship between minimal blocks and independence. 
\begin{proposition}
Every non-empty  family  of minimal blocks of $L$ with non-repeated blocks is a family of  independent blocks.  
\end{proposition}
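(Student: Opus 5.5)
The plan is a direct contradiction argument built on Proposition~\ref{lemma:indep_block}. Take a non-empty family $\{K_i\}_{i\in I}$ of minimal blocks of $L$ in which no block is repeated, that is, $K_i\neq K_j$ whenever $i\neq j$. By Definition~\ref{b_independent}, it suffices to show that any two members $K_i$ and $K_j$ with $i\neq j$ are independent. If $I$ has a single element, there is nothing to check, so the family is trivially independent.

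Fix $i\neq j$ and suppose, towards a contradiction, that $K_i$ and $K_j$ are not independent, so $(K_i\cap K_j)\not\subseteq\{\bot,\top\}$. Proposition~\ref{lemma:indep_block} then says that $K_i\cap K_j$ is a block of $L$. Since $K_i\cap K_j\subseteq K_i$ and $K_i$ is minimal (Definition~\ref{b_minmax}), $K_i\cap K_j$ cannot be a proper subset of $K_i$. Hence $K_i\cap K_j=K_i$. The same reasoning applied to the minimality of $K_j$ gives $K_i\cap K_j=K_j$. Therefore $K_i=K_j$, which contradicts the assumption that the blocks in the family are not repeated. We conclude that $(K_i\cap K_j)\subseteq\{\bot,\top\}$ for every $i\neq j$.

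The only delicate point is reading Definition~\ref{b_minmax} correctly. There, ``$K'\subset K$'' has to mean proper inclusion; otherwise no block could be minimal, since every block is contained in itself. With that reading, the step ``a block contained in a minimal block $K$ equals $K$'' is immediate, and Proposition~\ref{lemma:indep_block} does the rest. I do not expect any further obstacle.
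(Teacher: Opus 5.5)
Your proof is correct and follows the paper's argument: assume two distinct members are not independent, use Proposition~\ref{lemma:indep_block} to conclude that $K_i\cap K_j$ is a block, and then use the minimality of both blocks to force $K_i=K_j$, contradicting non-repetition. Your reading of $\subset$ in Definition~\ref{b_minmax} as proper inclusion matches the paper's usage.
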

\begin{proof}
We consider a family $\{ K_i\}_{i\in I}$ of minimal blocks of $L$ with non-repeated blocks. We proceed by reductio ad absurdum, let us suppose that there exist two minimal blocks of the family, $K_i$ and $K_j$ with $i\neq j$ and $i,j\in I$, such that
they are not independent. Hence, by Proposition~\ref{lemma:indep_block}, $P=K_i\cap K_j$ is a   block of $L$. Since $P\subseteq K_i$ and $P\subseteq K_j$, by the minimality of $K_i$ and $K_j$, we obtain that $P=K_i$ and $P=K_j$, that is, $K_i=K_j$, which is a contradiction since the family of minimal blocks does not contains repeated blocks. 

Therefore, it is satisfied that $(K_i\cap K_j)\subseteq \{\bot,\top\} $ for any two minimal blocks {$K_i$ and $K_j$} of the family, that is, the family of minimal blocks is a family of independent blocks. \qed
\end{proof}

Notice that the independence among blocks does not imply the minimality as the following example shows. 

\begin{example}
Let us consider again Example~\ref{ex1} in which the bounded lattice is represented in Figure~\ref{fig:max-blocks}. We consider again the block $K_3 = \{\bot, d, e, f, g, \top \}$ and we can define a new  block $K_6 = \{ \bot, a, b , \top\}$. It is easy to verify that $K_3$ and $K_6$ are independent blocks, since $K_3\cap K_6 \subseteq \{\bot, \top\}$, and $K_6$ is not a minimal block since $K_1=\{a,b\}\subset K_6$.
\end{example}

The following result asserts that a bounded lattice can be decomposed into independent blocks from a single block of the lattice.

\begin{proposition}\label{prop:L-K}
Given a  block $K$ of~$L$, if $(L\setminus K)\not\subseteq \{\bot,\top\}$, then the set $P = (L\setminus K)\cup \{\bot,\top\}$ is a complete block of~$L$. Moreover, $L$ can be decomposed into the independent blocks $K$ and $P$.

\end{proposition}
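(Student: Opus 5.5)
The plan is to verify directly the three requirements of Definition~\ref{block} for $P=(L\setminus K)\cup\{\bot,\top\}$: that $P$ is a proper sublattice of $L$, that $P\setminus\{\bot,\top\}\neq\varnothing$, and the up/down closure condition. Completeness is then immediate because $\bot,\top\in P$ by construction. Nonemptiness of $P\setminus\{\bot,\top\}=(L\setminus K)\setminus\{\bot,\top\}$ is exactly the hypothesis $(L\setminus K)\not\subseteq\{\bot,\top\}$. Properness follows because $K\setminus\{\bot,\top\}\neq\varnothing$ by Definition~\ref{block}, and any $k$ in this set lies outside $P$, so $P\subset L$.

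The key step is the closure condition. I will argue by contradiction. Take $p\in P\setminus\{\bot,\top\}$, so $p\notin K$, and suppose some $x\in(\uparrow p\cup\downarrow p)\setminus\{\bot,\top\}$ is not in $P$. Then $x\in K\setminus\{\bot,\top\}$. Since $p\preceq x$ or $x\preceq p$, we get $p\in(\uparrow x\cup\downarrow x)\setminus\{\bot,\top\}\subseteq K$ because $K$ is a block. This contradicts $p\notin K$. So the comparability relation is symmetric, and the block condition on $K$ transfers to its complement.

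For the sublattice property, take $p_1,p_2\in P$. If either one is $\bot$ or $\top$, then the join and meet are one of $p_1,p_2,\bot,\top$, which all lie in $P$. Otherwise $p_1\vee p_2\in\uparrow p_1$, so by the closure just proved it is either in $\{\bot,\top\}\subseteq P$ or in $P$ anyway. The same argument with $\downarrow p_1$ handles the meet. This is the same trick used in the proof of Proposition~\ref{b_union_maximal}.

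Finally, for the decomposition:
\begin{itemize}
\item $K\cup P=L$ trivially.
\item $K\cap P\subseteq\{\bot,\top\}$, since $K\cap(L\setminus K)=\varnothing$.
\end{itemize}
So $\{K,P\}$ is a family of independent blocks covering $L$, as required by Definition~\ref{b_independent}. I expect no real obstacle. The only delicate point is the symmetry argument in the closure condition, together with correctly handling the cases involving $\bot$ and $\top$.
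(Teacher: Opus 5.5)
Your proof is correct and follows essentially the same route as the paper: you get the closure condition by contradiction, using the symmetry of comparability together with the block property of $K$, and the sublattice property then follows from $p_1\wedge p_2\preceq p_1\preceq p_1\vee p_2$, with the cases $p_1\in\{\bot,\top\}$ handled separately. You also explicitly check that $P\neq L$ by means of an element of $K\setminus\{\bot,\top\}$, a point the paper leaves implicit.
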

\begin{proof}
We consider a bounded lattice $L$ such that $K$ is a block of $L$ and $(L\setminus K)\not\subseteq \{\bot,\top\}$. Hence, if we denote $P = (L\setminus K)\cup \{\bot,\top\}$, we have that $P\setminus  \{\bot,\top\}\neq \varnothing$.  Let us prove that $P$ is a block of $L$. It is clear that $L= K\cup P$ and $K\cap P \subseteq \{\bot,\top\}$, hence, by Definition~\ref{block} and that $K$ is a block of $L$, we have that any element $p\in P\setminus\{\bot,\top\}$ and any element $k\in K\setminus\{\bot,\top\}$ are incomparable. 
Now, given $p\in P\setminus\{\bot,\top\}$ we will prove that ${\up p}\subseteq P$. By reductio ad absurdum, if ${\up p}\not\subseteq P$, there would exist $p'\in {\up p}$ satisfying that $p'\in K$, which lead us to $p\in {\down p'}$ and, since $K$ is a block of $L$ and $p\neq \bot$, we have that $p\in K$, which is a contradiction because $p\neq \top$. With an {analogous} reasoning, we have that  ${\down p}\subseteq P$. Therefore, $({\up p} ~\cup{\down p})\subseteq P$, for all $p\in P\setminus\{\bot,\top\}$. Moreover, since $L$ is a bounded lattice, we have that $p_1\wedge p_2$ and $p_1\vee p_2$ exist for any two elements $p_1,p_2\in P$. Now, if $p_1\in \{\bot,\top\}$, then clearly $p_1\wedge p_2, p_1\vee p_2 \in P$. Otherwise, the chain of inclusions $p_1\wedge p_2\preceq p_1 \preceq p_1\vee p_2$ implies that $p_1\wedge p_2, p_1\vee p_2 \in ({\up p_1} ~\cup{\down p_1})\subseteq  P$. Thus, $P$ is a sublattice of~$L$. Consequently, $P$ is a block of $L$ and it is clear that is a complete block of $L$ by its definition. 

On the other hand, $K$ and $P$ are independent blocks since, by the definition of $P$,  we have straightforwardly that $K\cap P \subseteq \{\bot,\top\}$. Moreover, it is clear that $L = K\cup P$, by the definition of $P$. Thus, $L$ can be decomposed into $K$ and $P$.
\qed
\end{proof}

As a consequence, we {straightforwardly} obtain the following corollary. 

\begin{corollary}\label{b_maximal_L}
If there exists at least a  block in $L$, then there exists  a family  of   complete blocks of $L$, $\{ K_i\}_{i\in \Lambda}$, with $\Lambda$ a   index set,  satisfying   that $\underset{i\in \Lambda}{\bigcup}K_i     =L$.  
\end{corollary}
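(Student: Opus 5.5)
The corollary follows from Proposition~\ref{prop:L-K}, with a short case split so that both pieces of the family are complete blocks.

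Let $K$ be a block of $L$, which exists by hypothesis. We split into cases according to whether $K$ is complete.

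\emph{Case 1: $K$ is not complete.} Consider $K' = K\cup\{\bot,\top\}$. As noted after Definition~\ref{block}, $K'$ is a proper ideal and a proper filter of $L$. Since $L$ has at least three elements, such a subset is a block, so $K'$ is a complete block. This also shows $K'\neq L$. Replacing $K$ by $K'$, we may therefore assume $K$ is complete.

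\emph{Case 2: $K$ is complete.} We distinguish two subcases.
\begin{itemize}
\item If $(L\setminus K)\not\subseteq\{\bot,\top\}$, Proposition~\ref{prop:L-K} applies. It gives that $P=(L\setminus K)\cup\{\bot,\top\}$ is a complete block with $K\cup P=L$. Taking $\Lambda=\{1,2\}$, $K_1=K$ and $K_2=P$ yields the desired family.
\item If $(L\setminus K)\subseteq\{\bot,\top\}$, then, since $K$ is complete and so contains $\bot$ and $\top$, we get $L\setminus K=\varnothing$. Hence $K=L$, which contradicts $K\subset L$. So this subcase cannot occur.
\end{itemize}

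The main point to check is the claim in Case 1 that $K\cup\{\bot,\top\}$ is again a block and differs from $L$. This relies on the remark following Definition~\ref{block}, namely that a proper ideal that is also a proper filter is a block.

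If one prefers not to rely on that remark, Case 1 can be handled directly with Proposition~\ref{prop:L-K}. If $K$ is not complete, then $\bot\notin K$ or $\top\notin K$, so $L\setminus K$ contains $\bot$ or $\top$. For the proposition's hypothesis we need more: $L\setminus K$ must contain a non-extremal element. If it does, Proposition~\ref{prop:L-K} again gives a complete block $P$ with $K\cup P=L$. By Proposition~\ref{b_union_maximal}, $K\cup\{\bot,\top\}$ is then a complete block whenever it is properly contained in $L$. If it does not, then $K\cup\{\bot,\top\}=L$, and the only remaining issue is whether $L$ itself may be counted. Since blocks must be proper subsets, this case has to be excluded via the ideal/filter remark, which shows $K\cup\{\bot,\top\}\neq L$.
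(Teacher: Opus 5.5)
Your Case~2 is fine, and you are right that Proposition~\ref{prop:L-K} alone does not make $K$ complete. Case~1, however, has a genuine gap, and it cannot be closed: the corollary is false as stated, so the paper's one-line appeal to Proposition~\ref{prop:L-K} has the same hole.

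The remark after Definition~\ref{block} that you rely on is wrong. Every ideal is a down-set, so an ideal containing $\top$ is all of $L$. Hence $K\cup\{\bot,\top\}$ is never a proper ideal, and nothing guarantees $K\cup\{\bot,\top\}\neq L$.

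Concretely, take the three-element chain $\bot\prec a\prec\top$. The set $\{a\}$ is a block: it is a one-element sublattice and $(\up a\cup\down a)\setminus\{\bot,\top\}=\{a\}$. Yet $\{a\}\cup\{\bot,\top\}=L$. Moreover, a complete block would have to contain $\bot$, $\top$ and $a$, i.e.\ be $L$ itself, which Definition~\ref{block} excludes. So this lattice has a block but no complete block at all, and no family of complete blocks can cover $L$. The same happens for every chain with at least three elements.

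The case you flag in your last paragraph, $L\setminus K\subseteq\{\bot,\top\}$, is exactly this situation, and it genuinely occurs. The paper's proof misses it because Proposition~\ref{prop:L-K} requires $(L\setminus K)\not\subseteq\{\bot,\top\}$ and only certifies $P$, not $K$, as complete. Your appeal to Proposition~\ref{b_union_maximal} in the fallback also does not work. That proposition needs a family containing a complete block whose union is $K\cup\{\bot,\top\}$, and the only complete block you have, $P$, satisfies $K\cup P=L$.

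What your argument does prove is the statement under an extra hypothesis: some block $K$ satisfies $(L\setminus K)\not\subseteq\{\bot,\top\}$, which is equivalent to $L$ having a complete block. In that case, check directly that $K'=K\cup\{\bot,\top\}$ is a complete block:
\begin{itemize}
\item For $k_1,k_2\in K\setminus\{\bot,\top\}$, the meet $k_1\wedge k_2\in\down k_1$ is either $\bot$ or lies in $(\down k_1)\setminus\{\bot,\top\}\subseteq K$; dually for $k_1\vee k_2$. Meets and joins involving $\bot$ or $\top$ stay in $K'$ trivially.
\item The up/down closure condition is inherited from $K$.
\item $K'\neq L$ by the hypothesis.
\end{itemize}
Then $\{K',P\}$, with $P$ from Proposition~\ref{prop:L-K}, is the required family.
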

\begin{proof}
The proof follows from Proposition~\ref{prop:L-K}.
\qed
\end{proof}

 As a consequence, if the bounded lattice $L$ has independent blocks, then, by Proposition~\ref{prop:L-K}, $L$ can be decomposed into independent blocks.

\begin{corollary}\label{prop.btoindepeb}
If the  bounded lattice $(L,\preceq, \bot,\top)$ has independent blocks,  then it can be decomposed into independent blocks.
 \end{corollary}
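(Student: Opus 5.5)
The plan is to reduce the statement to Proposition~\ref{prop:L-K}. The hypothesis ``$L$ has independent blocks'' means that there exist two blocks $K_1$ and $K_2$ of $L$ with $(K_1\cap K_2)\subseteq\{\bot,\top\}$. It is enough to take one of them, say $K_1$, and show that it satisfies the hypothesis of Proposition~\ref{prop:L-K}, namely $(L\setminus K_1)\not\subseteq\{\bot,\top\}$. That proposition then gives directly that $P=(L\setminus K_1)\cup\{\bot,\top\}$ is a complete block, that $K_1$ and $P$ are independent, and that $L=K_1\cup P$. This is exactly a decomposition of $L$ into independent blocks in the sense of Definition~\ref{b_independent}.

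The key step is to check $(L\setminus K_1)\not\subseteq\{\bot,\top\}$. By Definition~\ref{block}, the block $K_2$ satisfies $K_2\setminus\{\bot,\top\}\neq\varnothing$, so we can pick some $k\in K_2\setminus\{\bot,\top\}$. Independence gives $k\notin K_1$, since otherwise $k\in (K_1\cap K_2)\setminus\{\bot,\top\}$. Hence $k\in L\setminus K_1$ and $k\notin\{\bot,\top\}$, which is what we need.

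There is essentially no obstacle here. The only point needing care is reading ``has independent blocks'' as the existence of at least two (pairwise) independent blocks. With that reading, the non-emptiness condition built into Definition~\ref{block} supplies the element outside $K_1\cup\{\bot,\top\}$ that Proposition~\ref{prop:L-K} requires. Note also that the decomposition obtained consists of $K_1$ and the complement block $P$; it need not use $K_2$ itself, and this is permitted by the definition of being decomposed into independent blocks.
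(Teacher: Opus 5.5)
Your proposal is correct and takes the same route as the paper: the paper simply invokes Proposition~\ref{prop:L-K}, and you also supply the step it leaves implicit, namely that an element $k\in K_2\setminus\{\bot,\top\}$ of the second, independent block witnesses $(L\setminus K_1)\not\subseteq\{\bot,\top\}$. Your remark that two independent blocks are genuinely needed is apt, since a single block $K$ with $L\setminus K\subseteq\{\bot,\top\}$ would not satisfy that hypothesis.
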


Once all the results related to the notion of block have been introduced, in the following section we will address the notion of subcontext of a formal context  within the theory of FCA.

\section{Independent subcontexts in the multi-adjoint framework}\label{sec.Indep-subc}

This section is devoted to provide the formal definition of independent subcontext, as well as the  definition of decomposition of a context into independent subcontexts. The formalization of both notions is essential to be able to develop mechanisms of decomposition of contexts within the fuzzy framework provided by the multi-adjoint paradigm.

Throughout this section, a  multi-adjoint frame $\cL ={(L,\adjoint_1, \dots, \adjoint_n)}$,  where $(L, \leq, \bot, \top)$ is a complete lattice, 
and the boundary condition is satisfied in both arguments, that is, $x\adjoint_i \top = \top\adjoint_i x = x$, for all $x\in L$, will be fixed.
With this purpose, the first definition we need to introduce is the notion of separable subcontext.

\begin{definition}\label{def.Lsep_subc}

Given the multi-adjoint frame $\cL$ and a context $(A,B,R,\sigma)$, a \emph{separable subcontext} is a tuple\footnote{Notice that $R_{Y\times X}$ and $\sigma_{Y\times X}$ denote the restriction of the relation $R$ and the mapping $\sigma$ to the Cartesian product $Y\times X$.} $(Y,X,R_{Y\times X},\sigma_{Y\times X})$ such that 
\begin{itemize}
    \item $Y\subset A$ and $X\subset B$ are  non-empty sets.
    \item There exist $a\in Y$ and $b\in X$ such that $R(a,b)\neq \bot$.
    \item $R(a,b')=\bot$, for all $(a,b')\in Y\times X^c$.
    \item $R(a',b)=\bot$, for all $(a',b)\in Y^c\times X$.
\end{itemize}

\end{definition}

\begin{remark}\label{remark.sepcomplement}
Note that we do not allow that the whole context be a separable subcontext of itself. Moreover, when the context considered is normalized, if $(Y,X,R_{Y\times X},\sigma_{Y\times X})$ is a separable subcontext, then other separable subcontext is the tuple $(Y^c,X^c,R_{Y^c\times X^c}, \sigma_{Y^c\times X^c})$, since 
for every $a'\in Y^c$, there exists $b'\in X^c$ such that $R(a',b')\neq \bot$ and it is straightforward that this tuple satisfies the last two conditions in Definition~\ref{def.Lsep_subc}. 

\end{remark}

In addition, we will denote a normalized context $(A, B, R, \sigma)$ by  $\cC_\n$. Moreover, its associated multi-adjoint concept lattice will be denoted as $\cM_\n$.
{From now on, we will assume that the concept lattice satisfies the ascending chain condition, which straightforwardly holds in a finite setting.} 
One important feature of the frames and the contexts considered in this work is that  every attribute and every object generate concepts different from the bottom and the top element of the concept lattices, as the following result states. This proposition will be important to proof several of the main results introduced in this work.
{
\begin{proposition}\label{prop.xconcept}
    Given the multi-adjoint frame $\cL$ and the context $\cC_\n$, then the following hold:
\begin{itemize}
    \item $\fcon{g_\top}{f_\bot}, \fcon{g_\bot}{f_\top}\in \cM$.
    \item  $\fcon{\phi_{a,x}^\down}{\phi_{a,x}^{\down\up}}, \fcon{\phi_{b,y}^{\up\down}}{\phi_{b,y}^{\up}} \not\in \{\fcon{g_\top}{f_\bot}, \fcon{g_\bot}{ f_\top}\}$, for all $a\in A$, $b\in B$ and $x, y\in L\setminus\{\bot\}$.
\end{itemize}  
\end{proposition}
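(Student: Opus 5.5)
Both items come down to evaluating the derivation operators on constant and one-point fuzzy sets. Throughout we use the boundary condition of $\cL$ and the normalization of $\cC_\n$.

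\emph{First item.} We compute $g_\top^{\up}$. For every $a\in A$ we have $g_\top^\up(a)=\inf_b R(a,b)\dimp^{\sigma(a,b)}\top$. Normalization gives some $b_2$ with $R(a,b_2)=\bot$. By the adjoint property and the boundary condition, $\bot\dimp^i\top=\max\{x\mid x\adjoint_i\top\le\bot\}=\max\{x\mid x\le\bot\}=\bot$. Hence $g_\top^\up=f_\bot$. Next, $f_\bot^\down(b)=\inf_a R(a,b)\uimp_{\sigma(a,b)}\bot$. Since $R(a,b)\uimp_i\bot\ge\bot$ for all $a$, the object $b$ can only satisfy $f_\bot^\down(b)=\top$ if every term is $\top$. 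By Proposition~\ref{prop:atproperties}, $\bot\adjoint_i y=\bot\le R(a,b)$ for all $y$, so $R(a,b)\uimp_i\bot=\top$. Thus $f_\bot^\down=g_\top$ and $\fcon{g_\top}{f_\bot}\in\cM$. The pair $\fcon{g_\bot}{f_\top}$ is handled symmetrically: $f_\top^\down=g_\bot$ uses an attribute $a_2$ with $R(a_2,b)=\bot$ together with $\bot\uimp_i\top=\bot$, and $g_\bot^\up=f_\top$ uses $\bot\dimp^i$-type identities via Proposition~\ref{prop:atproperties}(5).

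\emph{Second item, attribute case.} Fix $a\in A$ and $x\ne\bot$. By Lemma~\ref{lem.fuzzy-attr-obj}, $\phi_{a,x}^\down(b')=R(a,b')\uimp_{\sigma(a,b')}x$.
\begin{itemize}
\item Take $b_2$ with $R(a,b_2)=\bot$. Then $\phi_{a,x}^\down(b_2)=\bot\uimp x=\top$ by Proposition~\ref{prop:atproperties}(5), so the extent is not $g_\bot$. Hence the concept is not the bottom $\fcon{g_\bot}{f_\top}$.
\item For the top, it suffices to find $b$ with $\phi_{a,x}^\down(b)\ne\top$. Equivalently, by the adjoint property, we need $x\adjoint_i\top\not\le R(a,b)$, i.e.\ $x\not\le R(a,b)$ by the boundary condition. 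Alternatively, via Corollary~\ref{cor.threpresent}, the concept equals the top iff $x\adjoint_{\sigma(a,b)}y\le R(a,b)$ for all $b,y$. Taking $y=\top$ reduces this to $x\le R(a,b)$ for all $b$.
\item Now use $b_2$ again: $x\le R(a,b_2)=\bot$ would force $x=\bot$, a contradiction. So the concept is not the top either.
\end{itemize}

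\emph{Second item, object case.} This is dual. For $b\in B$ and $y\ne\bot$, normalization gives $a_2$ with $R(a_2,b)=\bot$. Then $\phi_{b,y}^\up(a_2)=\top$, so the intent is not $f_\bot$ and the concept is not the top. By Corollary~\ref{cor.threpresent} with $x=\top$, the concept is the bottom only if $y\le R(a,b)$ for all $a$, which fails at $a_2$.

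The main obstacle is getting the bottom/top identification right at each step. Recall that the top concept has extent $g_\top$ and the bottom has extent $g_\bot$, and that the boundary condition is exactly what turns $x\adjoint\top\le\bot$ into $x=\bot$.
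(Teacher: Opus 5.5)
Your first item is correct. So are the ``not top'' half of the attribute case and the ``not bottom'' half of the object case, both of which follow the paper's use of an element with $R=\bot$ plus the boundary condition. The other two halves, however, rest on a false identity.

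In the attribute case you claim $\phi_{a,x}^\down(b_2)=\bot\uimp_{\sigma(a,b_2)}x=\top$ by Proposition~\ref{prop:atproperties}(5). That item says $z\uimp\bot=\top$, with $\bot$ in the second slot; it says nothing about $\bot\uimp x$. In fact, by the adjoint property and the boundary condition, $\bot\uimp_i x=\top$ iff $x=x\adjoint_i\top\le\bot$, i.e.\ iff $x=\bot$. For the G\"odel implication, for instance, $0\leftarrow_{\G}x=0$ whenever $x>0$. So for $x\neq\bot$ you always have $\phi_{a,x}^\down(b_2)\neq\top$, which is exactly what your own third bullet proves: your first and third bullets contradict each other. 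The dual claim $\phi_{b,y}^\up(a_2)=\bot\dimp y=\top$ in the object case fails for the same reason. As written, you have therefore not shown that $\fcon{\phi_{a,x}^\down}{\phi_{a,x}^{\down\up}}$ differs from the bottom, nor that $\fcon{\phi_{b,y}^{\up\down}}{\phi_{b,y}^{\up}}$ differs from the top. Your phrase ``$\bot\dimp^i$-type identities'' in the first item shows the same slot confusion, although there the identity you actually need, $z\dimp^i\bot=\top$, is the right one.

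The missing ingredient is the other half of normalization, which you never use: there is some $b_1$ with $R(a,b_1)\neq\bot$. By monotonicity and the boundary condition,
$x\adjoint_{\sigma(a,b_1)}R(a,b_1)\le\top\adjoint_{\sigma(a,b_1)}R(a,b_1)=R(a,b_1)$.
Hence $\phi_{a,x}^\down(b_1)=R(a,b_1)\uimp_{\sigma(a,b_1)}x\ge R(a,b_1)\neq\bot$, so the extent is not $g_\bot$; this is exactly how the paper argues. Dually, an attribute $a_1$ with $R(a_1,b)\neq\bot$ gives
$R(a_1,b)\adjoint_{\sigma(a_1,b)}y\le R(a_1,b)\adjoint_{\sigma(a_1,b)}\top=R(a_1,b)$.
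Hence $\phi_{b,y}^\up(a_1)\ge R(a_1,b)\neq\bot$, so the intent is not $f_\bot$ and the object concept is not the top. With these two replacements the rest of your argument goes through.
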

}
\begin{proof}
{
Let us consider any attribute $a\in A$. Given an object $b\in B$, by Lemma~\ref{lem.fuzzy-attr-obj}, Proposition~\ref{prop:atproperties} and the fact that $f_\bot=\phi_{a,\bot}$, we have that 
$$
f_\bot^{\down}(b)= R(a,b)\uimp_{\sigma(a,b)}  \bot= \top
$$
Therefore, $f_\bot^{\down}=g_\top$. 
In addition, we have that 
\begin{eqnarray*}
f_\bot^{\down\up}(a')&=& {g_\top}^\up (a')\\
&=& \inf\{R(a',b)\dimp^{\sigma(a',b)}  {g_\top}(b)\mid b\in B\}\\
&=& \inf\{R(a',b)\dimp^{\sigma(a',b)}  \top\mid b\in B\} 
\end{eqnarray*}
Since the context is normalized, for every $a\in A$ there exists $b_a\in B$ such that $R(a,b_a)=\bot$. Thus, 
$$
R(a,b_a)\dimp^{\sigma(a,b_a)}  \top= \bot\dimp^{\sigma(a,b_a)}  \top = \max\{x\in L_1\mid x\adjoint_{\sigma(a,b_a)}  \top\leq \bot\}= \bot
$$
Due to the fact that every $\adjoint_i$ satisfies the boundary condition in the first argument. Therefore, we obtain $f_\bot^{\down\up} =f_\bot$, and so 
$\fcon{g_\top}{f_\bot}\in \cM$.
Analogously, we can obtain that $\fcon{g_\bot}{f_\top}\in \cM$ considering the fuzzy-objects.

 Finally, in order to prove the second item, it is sufficient to show that $\fcon{\phi_{a,x}^\down}{\phi_{a,x}^{\down\up}} \not\in \{\fcon{g_\top}{f_\bot}, \fcon{g_\bot}{ f_\top}\}$, for all $a\in A$ and $x\in L\setminus\{\bot\}$, since the proof for the objects  follows analogously.
Notice that, due to $f_\bot=\phi_{a,\bot}$, for all $a\in A$, it is necessary to remove the bottom element from the lattice.} 
    Let us consider any attribute $a\in A$ of the context $\cC_\n$ and $x\in L\setminus\{\bot\}$. Hence, for every object $b\in B$, by Lemma~\ref{lem.fuzzy-attr-obj} and Proposition~\ref{prop:atproperties}, we have the following chain of equalities:
    \begin{align*}
   \phi_{a, x}^\down(b)
    &= R(a,b)\uimp_{\sigma(a,b)} x\\
    &= \max\{y\in L \mid x \adjoint_{\sigma(a,b)} y \leq R(a,b)\}
\end{align*}
Since the context is normalized, there exist $b_0,b_1\in B$ such that $R(a,b_0)= \bot$ and $R(a,b_1)\neq \bot$. Therefore,
    \begin{itemize}
        \item Considering $b_1$ to compute $\phi_{a, x}^\down(b_1)$, by the monotonicity of the operator and the boundary condition, we have that 
        $$
        x\adjoint_{\sigma(a,b_1)} R(a,b_1) \leq \top \adjoint_{\sigma(a,b_1)} R(a,b_1) = R(a,b_1)
        $$
        Therefore, $R(a,b_1)\in \{y\in L \mid x \adjoint_{\sigma(a,b_1)} y \leq R(a,b_1)\}$, that is, $\phi_{a, x}^\down(b_1)\neq \bot$ and hence, $\phi_{a, x}^\down\neq g_\bot$.
        \item Considering $b_0$, we have that $ x \adjoint_{\sigma(a,b_0)} \top \not\leq R(a,b_0)= \bot$ due to the boundary condition and $x\in L\setminus\{\bot\}$. Therefore, $\top\not \in \{y\in L \mid x \adjoint_{\sigma(a,b_0)} y \leq R(a,b_0)\}$, i.e., $\phi_{a, x}^\down(b_0)\neq \top$ and hence, $\phi_{a, x}^\down\neq g_\top$.
    \end{itemize}
In conclusion, we can assert that $\fcon{\phi_{a,x}^\down}{\phi_{a,x}^{\down\up}}\not\in \{\fcon{g_\top}{f_\bot}, \fcon{g_\bot}{ f_\top}\}$. \qed

\end{proof}

Next, the formal definition of decomposition of a fuzzy context into  subcontexts is introduced, which is one if the main notions of the paper.

\begin{definition}\label{def.decomp}

The context $\cC_\n$ has a \emph{decomposition  into independent subcontexts},  
if there exists a non-empty index set $\Lambda$ such that:
\begin{itemize}
 \item $(A_\lambda, B_\lambda, R_{A_\lambda\times B_\lambda}, \sigma_{A_\lambda\times B_\lambda})$ is a separable subcontext of  $\cC_\n$, for all ${\lambda\in\Lambda}$.
	\item $\bigcup_{\lambda \in \Lambda} A_\lambda = A$,  $\bigcup_{\lambda \in \Lambda} B_\lambda = B$, and  $A_\lambda\cap  A_\mu=\varnothing$, $B_\lambda\cap  B_\mu=\varnothing$, for all $\lambda,\mu\in \Lambda$ with $\lambda\neq\mu$.
        \item {The mapping $\sigma$ associates conjunctors with no zero-divisor for the subsets $A_\lambda^c\times B_\lambda$ and $A_\lambda\times B_\lambda^c$ of $A\times B$, for all $\lambda\in\Lambda$.}
\end{itemize} 
 Every tuple $(A_\lambda, B_\lambda, R_{A_\lambda\times B_\lambda}, \sigma_{A_\lambda\times B_\lambda})$  is called \emph{independent subcontext of~$\cC_\n$.}
\end{definition}

In order to simplify the notation when we consider a decomposition into independent subcontexts  $\{(A_\lambda, B_\lambda, R_{A_\lambda\times B_\lambda}, \sigma_{A_\lambda\times B_\lambda})\mid \lambda\in\Lambda\}$, we will  denote each  of them by $(A_\lambda, B_\lambda, R_\lambda, \sigma_\lambda)$.
In order to have a better understanding about the existing differences between the notions of separable subcontext and independent subcontext, 
we will introduce the following example. 

\begin{example}\label{ex.defnozerodiv}

Let us consider the multi-adjoint frame $(L, \leq, \adjoint_{\G}^*, \adjoint_{\Lu}^*)$ where $L = \{0, 0.2, 0.4, 0.6, 0.8, 1\}$ represents the partition of the unit interval in five pieces, and $\adjoint_\G^*$ and $\adjoint_\Lu^*$ are the discretization of the G\"odel and \L ukasiewicz t-norms, respectively~\cite{ija-cmr15,caepia-bires}.
Operators  $\adjoint^*_{\G}, \adjoint^*_{\Lu}\colon L\times L \to$ are defined as:
\begin{align*}
x \adjoint^*_{\G}y&= \frac{\textstyle  \lceil 5\cdot \min\{x,y\}\rceil}{\textstyle 5} & & x \adjoint^*_{\Lu}y= \frac{\textstyle  \lceil 5\cdot \max\{0,x+y-1\}\rceil}{\textstyle 5}\\
\end{align*}
 for all $x, y\in L$, where $\lceil\,\_\,\rceil$ is the ceiling function. In this case, the residuated implications
$\dimp^*_{\G}, \uimp^*_{\G},\dimp^*_{\Lu}, \uimp^*_{\Lu} : L\times L \to L$ are defined, for all $x,y,z\in L$, as:
\begin{align*}
     z\dimp^*_{\G} y &= \frac{\lfloor 5\cdot (z\leftarrow_{\G} y) \rfloor}{5} & z\dimp^*_{\Lu} y = \frac{\lfloor 5\cdot \min\{1,1-y+z\} \rfloor}{5}\\\\
    z\uimp^*_{\G} x &= \frac{\lfloor 5\cdot (z\leftarrow_{\G} x) \rfloor}{5}
     & z\uimp^*_{\Lu} x = \frac{\lfloor 5\cdot \min\{1,1-x+z\} \rfloor}{5}
\end{align*}
where $\lfloor {\_} \rfloor$ is the floor function and $\leftarrow\colon[0,1]\times[0,1]\to [0,1]$ is the residuated implication of the G\"{o}del t-norm, defined for all $y, z\in[0,1]$ as:
$$
z\leftarrow_{\G} y = \displaystyle\left\{\begin{array}{ll}
		1 & \mbox{ if } y\leq z\\
		\displaystyle z & \mbox{ otherwise }
	\end{array}\right.
$$
Now, we consider two normalized contexts $(A,B,R,\sigma)$ and $(A,B,R,\sigma')$ given by the set of attributes $A =\{a_1, a_2, a_3\}$, the set of objects $B = \{b_1, b_2, b_3,b_4\}$, the relation $R\colon A\times B \to L$ defined in Table~\ref{tab.R0} and where $\sigma$ and $\sigma'$ are defined as:

\begin{align*}
    \sigma(a,b) = &\displaystyle\left\{\begin{array}{ll}
		\adjoint^*_{\Lu} & \mbox{ if } (a,b)=(a_1,b_2)\\
		\adjoint^*_{\G} & \mbox{ otherwise }
	\end{array}\right.\\
     \sigma'(a,b) = &\displaystyle\left\{\begin{array}{ll}
		\adjoint^*_{\Lu} & \mbox{ if } (a,b)\in \{(a_1,b_2),(a_3,b_3)\}\\
		\adjoint^*_{\G} & \mbox{ otherwise }
	\end{array}\right.
\end{align*}

\begin{table}[!ht]
    \begin{center}
        \begin{tabular}{|c|cccc|}
            \hline
            $R$ & $b_1$ & $b_2$ & $b_3$ & $b_4$ \\ \hline
            $a_1$& 0.6 & 0.8 & 0 & 0\\
            $a_2$ & 0 & 0 & 0.4 & 0 \\
            $a_3$ & 0 & 0 & 0 & 1\\
            \hline
        \end{tabular}
    \end{center}
    \caption{Fuzzy relation of Example~\ref{ex.defnozerodiv}.}\label{tab.R0}
\end{table}	

On the one hand, considering the context $(A,B,R,\sigma)$ we obtain the following  six different separable subcontexts:
\begin{itemize}
    \item $(A_1, B_1, R_{A_1\times B_1},  \sigma_{A_1\times B_1})$, where $A_1=\{a_1\}$ and  $B_1 =\{b_1,b_2\}$.
    \item $(A_2, B_2, R_{A_2\times B_2}, \sigma_{A_2\times B_2})$, where $A_2 = \{a_2\}$ and $B_2=\{b_3\}$.
    \item $(A_3, B_3, R_{A_3\times B_3}, \sigma_{A_3\times B_3})$, where $A_3=\{a_3\}$ and  $B_3=\{b_4\}$.
    \item $(A_4, B_4, R_{A_4\times B_4}, \sigma_{A_4\times B_4})$, where $A_4=A_1\cup A_2$ and  $B_3=B_1\cup B_2$.
    \item $(A_5, B_5, R_{A_5\times B_5}, \sigma_{A_5\times B_5})$, where $A_5=A_1\cup A_3$ and  $B_5=B_1\cup B_3$.
    \item $(A_6, B_6, R_{A_6\times B_6}, \sigma_{A_6\times B_6})$, where $A_6=A_2\cup A_3$ and  $B_6=B_2\cup B_3$.
\end{itemize}
On the other hand, if the context $(A, B, R, \sigma')$ is considered, the separable subcontexts are the same except for the mapping $\sigma$, since the notion of separable subcontext does not depend on the considered mapping $\sigma$ but on the fuzzy relation $R$, which is the same in both contexts.
Therefore, we can see that there exists a one-to-one correspondence between the separable subcontexts in both contexts, that is, $(A_\lambda, B_\lambda, R_{A_\lambda\times B_\lambda}, \sigma_{A_\lambda\times B_\lambda})$ is a separable subcontext in $(A,B,R,\sigma)$ if and only if $(A_\lambda, B_\lambda, R_{A_\lambda\times B_\lambda}, \sigma'_{A_\lambda\times B_\lambda})$ is a separable subcontext in $(A,B,R,\sigma')$. 

However, being a separable subcontext does not always {imply} being an independent subcontext. This is due to {the fact that} the mapping of the context plays a key role in Definition~\ref{def.decomp}. In this case, we can build four different decompositions into independent subcontexts from the context $(A,B,R,\sigma)$, which are the following:

\begin{itemize}
    \item $\{(A_\lambda, B_\lambda, R_{A_\lambda\times B_\lambda}, \sigma_{A_\lambda\times B_\lambda})\mid \lambda\in\{1,2,3\}\}$.
    \item $\{(A_\lambda, B_\lambda, R_{A_\lambda\times B_\lambda}, \sigma_{A_\lambda\times B_\lambda})\mid \lambda\in\{1,6\}\}$.
    \item $\{(A_\lambda, B_\lambda, R_{A_\lambda\times B_\lambda}, \sigma_{A_\lambda\times B_\lambda})\mid \lambda\in\{2,5\}\}$.
    \item $\{(A_\lambda, B_\lambda, R_{A_\lambda\times B_\lambda}, \sigma_{A_\lambda\times B_\lambda})\mid \lambda\in\{3,4\}\}$.
\end{itemize}

If we consider the context $(A,B,R,\sigma')$, we only have one possible decomposition into independent subcontexts, that is:
\begin{itemize}
    \item $\{(A_\lambda, B_\lambda, R_{A_\lambda\times B_\lambda}, \sigma'_{A_\lambda\times B_\lambda})\mid \lambda\in\{1,6\}\}$.
\end{itemize}

For instance, $(A_2, B_2, R_{A_2\times B_2}, \sigma_{A_2\times B_2})$ is an independent subcontext in the decomposition $\{(A_\lambda, B_\lambda, R_{A_\lambda\times B_\lambda}, \sigma_{A_\lambda\times B_\lambda})\mid \lambda\in\{2,5\}\}$ of $(A,B,R,\sigma)$, but the subcontext $(A_2, B_2, R_{A_2\times B_2}, \sigma'_{A_2\times B_2})$ is not an  independent subcontext of any decomposition of $(A,B,R,\sigma')$, since a conjunctor with zero-divisors is assigned by $\sigma'$ to a pair which does not belong to the separable subcontext $(A_2, B_2, R_{A_2\times B_2}, \sigma'_{A_2\times B_2})$, in particular $\sigma'(a_3,b_3) = \adjoint^*_{\Lu}$ where $(a_3,b_3)\in A_2^c\times B_2$.

Therefore, the different assignments carried out by the mappings  $\sigma$ and $\sigma'$ cause  that the context $(A,B,R,\sigma)$ has 4 different decompositions into independent subcontexts and $(A,B,R,\sigma')$ only one. \qed 
\end{example}

\begin{remark}\label{remark.cardsubcontext}
Notice that considering a frame $\cL$ and a  context $\cC_\n$, if there exists a separable subcontext $(Y,X,R_{Y\times X},\sigma_{Y\times X})$, then the index set  $\Lambda$ in Definition~\ref{def.decomp} has at least two elements, since $(Y^c,X^c,R_{Y^c\times X^c}, \sigma_{Y^c\times X^c})$  is also a separable subcontext (see Remark~\ref{remark.sepcomplement}) and both subcontexts form a decomposition into independent subcontexts of $\cL$, when the mapping $\sigma$ associates conjunctors with no zero-divisors for the subsets $Y^c\times X$ and $Y\times X^c$ of $A\times B$.
\end{remark}

The following lemma is a technical result which will be useful to demonstrate the main results of this paper. 

\begin{lemma}\label{lem.phi}

Given the multi-adjoint frame  $\cL$ and the context $\cC_\n$ such that it has a decomposition into independent subcontexts $\{(A_\lambda, B_\lambda, R_\lambda, \sigma_\lambda)\mid \lambda\in\Lambda\}$, an attribute $a\in A_\lambda$ and $x\in L{\setminus\{\bot\}}$, then the extents of the fuzzy attributes, that is $\phi_{a,x}^\down\colon B\to L$,   
specifically are
$$
\phi_{a,x}^\down(b) = \left\{
    \begin{array}{ll}
	R(a,b)\uimp_{\sigma(a,b)} x & \mbox{ if }   b\in B_\lambda \\
	\bot & \mbox{ otherwise}  
    \end{array}\right.		
$$	
for all $b\in B$. 
\end{lemma}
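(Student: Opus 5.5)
The plan is to start from Lemma~\ref{lem.fuzzy-attr-obj}, which gives $\phi_{a,x}^\down(b)=R(a,b)\uimp_{\sigma(a,b)} x$ for every $b\in B$. Then the first branch of the statement ($b\in B_\lambda$) holds immediately. The only real work is the case $b\notin B_\lambda$, where the claimed value is $\bot$.

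So fix $b\in B\setminus B_\lambda$. Since the $B_\mu$ partition $B$, $b\in B_\mu$ for some $\mu\neq\lambda$, and hence $(a,b)\in A_\lambda\times B_\lambda^c$. Because $(A_\lambda,B_\lambda,R_\lambda,\sigma_\lambda)$ is a separable subcontext (Definition~\ref{def.Lsep_subc}), we get $R(a,b)=\bot$. By Proposition~\ref{prop:atproperties}(7),
$$
\phi_{a,x}^\down(b)=\bot\uimp_{\sigma(a,b)} x=\max\{y\in L\mid x\adjoint_{\sigma(a,b)} y\leq \bot\}.
$$
It therefore suffices to show that $x\adjoint_{\sigma(a,b)} y=\bot$ forces $y=\bot$.

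This is the step where the hypotheses matter, and it is the main (though modest) obstacle. By the third condition of Definition~\ref{def.decomp}, $\sigma$ assigns conjunctors without zero-divisors on $A_\lambda\times B_\lambda^c$, so $\adjoint_{\sigma(a,b)}$ has no zero-divisors. We have $x\neq\bot$ by hypothesis. If some $y\neq\bot$ satisfied $x\adjoint_{\sigma(a,b)} y=\bot$, the pair $(x,y)$ would be a zero-divisor pair, a contradiction. Hence the only admissible $y$ is $\bot$, which is admissible by Proposition~\ref{prop:atproperties}(4) ($x\adjoint\bot=\bot$). So the maximum is $\bot$.

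Combining both cases gives the piecewise formula. The boundary condition of the frame is not needed here; the proof uses only the vanishing of $R$ outside the block and the absence of zero-divisors.
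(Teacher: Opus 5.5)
Your proposal is correct and follows essentially the same route as the paper. Both use Lemma~\ref{lem.fuzzy-attr-obj} for the general formula, the vanishing of $R$ on $A_\lambda\times B_\lambda^c$ from the separable-subcontext condition, and the absence of zero-divisors to conclude $\bot\uimp_{\sigma(a,b)}x=\bot$; you simply make that last step explicit via Proposition~\ref{prop:atproperties}(7), where the paper only asserts it. The detour through the partition, placing $b$ in some $B_\mu$, is unnecessary, since $b\notin B_\lambda$ already gives $(a,b)\in A_\lambda\times B_\lambda^c$.
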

\begin{proof}

Given  $a\in A_\lambda$ and $x\in L{\setminus\{\bot\}}$,
by Lemma~\ref{lem.fuzzy-attr-obj}, we have that $\phi_{a,x}^\down(b) = R(a,b)\uimp_{\sigma(a,b)} x$, for {every} $b\in B$.
Now, if $b\in {B_\lambda}^c$, then we have that $R(a',b) = \bot$, for all $a'\in A_\lambda$, by Definition~\ref{def.Lsep_subc}. {Therefore, due to $a\in A_\lambda$,} we obtain that 
$$
\phi_{a,x}^\down(b) = \bot\uimp_{\sigma(a,b)} x 
$$
Since the conjunctors associated with $A_\lambda\times B_\lambda^c$ have no zero-divisors, we can assert that $\phi_{a,x}^\down(b) =\bot$. 
Thus, we obtain the result.\qed
\end{proof}

An essential part to achieve the goal of this paper is to study the relationship between the concepts of the concept lattice and the independent subcontexts of a decomposition of the corresponding context. To this end, we will consider an index set $I$ such that $M_F(A)=\{\fcon{\phi_{a_i,x_i}^\down}{\phi_{a_i,x_i}^{\down\up}} \mid  i\in I\}$ is  the set of  $\wedge$-irreducible elements of $\cM_\n$. By the definition of $M_F(A)$, two different indices $i, j\in I$ (with $i\neq j$) may exist such that  $a_i=a_j$ and $x_i\neq x_j$. 
Moreover, if $A'\subseteq A$ and $\fcon g f \in \cM_\n$, we will denote the sets
\begin{align*}
    M_F^{A'}&=\{\fcon{\phi_{a_i,x_i}^\down}{\phi_{a_i,x_i}^{\down\up}}\in M_F(A)\mid a_i\in A'\}\\
    M_g^{A'}&=\{ \fcon{\phi_{a_i,x_i}^\down}{\phi_{a_i,x_i}^{\down\up}} \in M_F^{A'} \mid  \fcon g f \preceq \fcon{\phi_{a_i,x_i}^\down}{\phi_{a_i,x_i}^{\down\up}} \}
\end{align*}

The following result shows that, if a context $\cC_\n$ has a decomposition into independent subcontexts, then every concept different from $\fcon{g_\top}{f_\bot}$ and $\fcon{g_\bot}{f_\top}$\footnote{Recall that, by Proposition~\ref{prop.xconcept}, the top and bottom concepts of $\cM_\n$ are $\fcon{g_\top}{f_\bot}$ and $\fcon{g_\bot}{f_\top}$, respectively.} is  decomposed into $\wedge$-irreducible elements associated with attributes in only one of the  separable subcontexts.

\begin{proposition}\label{prop.mi}
Given the multi-adjoint frame $\cL$, the context $\cC_\n$ such that $\{(A_\lambda, B_\lambda, R_\lambda, \sigma_\lambda)\mid \lambda\in\Lambda\}$ is a decomposition into independent subcontexts, and a concept $\fcon g f\in \cM_\n$, with
$g \neq g_\top$ and $g \neq g_\bot$, then there exists $\lambda\in \Lambda$ such that
$$
\fcon g f=\bigwedge M_g^{A_\lambda}~\mbox{ and }~M_g^{A_\lambda^c} = \varnothing
$$

\end{proposition}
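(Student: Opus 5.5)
The plan is to combine Proposition~\ref{prop:acc} (every element is the meet of the $\wedge$-irreducibles above it) with Lemma~\ref{lem.phi}: extents of $\wedge$-irreducibles are supported on a single $B_\lambda$. Two such extents coming from different subcontexts then have disjoint supports, which forces $g=g_\bot$.

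First, since $\cM_\n$ satisfies the ascending chain condition, Proposition~\ref{prop:acc} gives
$$
\fcon g f=\bigwedge\{m\in M_F(A)\mid \fcon g f\preceq m\}=\bigwedge M_g^{A}.
$$
Because $g\neq g_\top$, and the empty meet is the top concept $\fcon{g_\top}{f_\bot}$ by Proposition~\ref{prop.xconcept}, the set $M_g^A$ is non-empty. Moreover, every element $\fcon{\phi_{a_i,x_i}^\down}{\phi_{a_i,x_i}^{\down\up}}$ of $M_F(A)$ has $x_i\neq\bot$. Indeed, $\phi_{a,\bot}=f_\bot$ and $f_\bot^\down=g_\top$ (proof of Proposition~\ref{prop.xconcept}), while Theorem~\ref{th:and:irred} excludes extents equal to $g_\top$. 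Hence Lemma~\ref{lem.phi} applies to every $\wedge$-irreducible: if $a_i\in A_\lambda$, then $\phi_{a_i,x_i}^\down(b)=\bot$ for all $b\notin B_\lambda$.

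Next, pick some $\fcon{\phi_{a_j,x_j}^\down}{\phi_{a_j,x_j}^{\down\up}}\in M_g^A$. Since the $A_\lambda$ partition $A$, there is a unique $\lambda\in\Lambda$ with $a_j\in A_\lambda$. I claim $M_g^{A_\lambda^c}=\varnothing$. Suppose instead that some $\fcon{\phi_{a_k,x_k}^\down}{\phi_{a_k,x_k}^{\down\up}}\in M_g^A$ has $a_k\in A_\mu$ with $\mu\neq\lambda$. Then $g\preceq \phi_{a_j,x_j}^\down$ and $g\preceq\phi_{a_k,x_k}^\down$ pointwise, because the order on concepts is the order on extents. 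The first extent vanishes outside $B_\lambda$ and the second vanishes outside $B_\mu$. Since $B_\lambda\cap B_\mu=\varnothing$, for every $b\in B$ one of the two values is $\bot$, so $g(b)=\bot$. Thus $g=g_\bot$, contradicting the hypothesis. Consequently every element of $M_g^A$ is associated with an attribute of $A_\lambda$, so $M_g^A=M_g^{A_\lambda}$, and therefore $\fcon g f=\bigwedge M_g^{A_\lambda}$.

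The main obstacle is the bookkeeping that justifies applying Lemma~\ref{lem.phi}, namely excluding $x_i=\bot$ among the $\wedge$-irreducibles. It also needs to be checked that the meet in $\cM_\n$ has extent equal to the pointwise infimum of the extents. In fact only $g\preceq$ each extent is used, which follows directly from the ordering of concepts. The rest of the argument is the disjoint-support observation.
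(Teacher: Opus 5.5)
Your proof is correct and follows essentially the same route as the paper: the ascending chain condition writes $\fcon{g}{f}$ as the meet of the $\wedge$-irreducibles above it, and Lemma~\ref{lem.phi} shows that two such irreducibles coming from different subcontexts have extents with disjoint supports, which forces $g=g_\bot$. Two points of care in your version are left implicit in the paper's proof: you check that every $\wedge$-irreducible has $x_i\neq\bot$, so that Lemma~\ref{lem.phi} applies, and you use only the pointwise bounds $g\preceq\phi_{a_i,x_i}^\down$ rather than computing the extent of the meet.
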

\begin{proof}

Given a concept $\fcon{g}{f}\in \cM_\n$, being $g\neq g_\top$ and $g\neq g_\bot$, {since $\cM_\n$ satisfied the ascending chain condition, by Proposition~\ref{prop:acc}, we can ensure} that there exists $\lambda\in \Lambda$ and  $\fcon{\phi_{a,x}^\down}{\phi_{a,x}^{\down\up}}\in M_F^{A_\lambda}$, such that $\fcon g f  \preceq \fcon{\phi_{a,x}^\down}{\phi_{a,x}^{\down\up}}$, that means $M_g^{A_\lambda} \neq \varnothing$.
Moreover, by means of its expression by $\wedge$-irreducible elements and the fact that $\bigcup_{\lambda \in \Lambda} A_\lambda = A$, we can divide it  into the subsets  $I_1 = \{ i\in I \mid  a_i\in A_\lambda, g\preceq_2 \phi_{a_i,x_i}^\down\}$ and $I_2 = \{ j\in I \mid {a_j\in { A^c_\lambda}, g\preceq_2 \phi_{a_j,x_j}^\down}\}$, which implies that 
$$
\fcon{g}{f}= \left(\bigwedge_{i\in I_1} \fcon{\phi_{a_i,x_i}^\down}{\phi_{a_i,x_i}^{\down\up}}\right)\wedge\left(\bigwedge_{j\in I_2} \fcon{\phi_{a_j,x_j}^\down}{\phi_{a_j,x_j}^{\down\up}}\right)
$$
Let us see that the concept $\fcon{g}{f}$ can only be expressed with elements {with $i\in I_1$}, but not with both. Clearly, by the selection of $\lambda$, we have that $I_1\neq \varnothing$. Now,  if we assume that $I_2\neq \varnothing$, then we consider $j\in I_2$ and obtain, by Lemma~\ref{lem.phi} and that  $\bigcup_{\lambda \in \Lambda} B_\lambda = B$, the following statements. 
\begin{itemize}
    \item If $b\in B_\lambda$, then $\bigwedge_{j\in I_2} \phi_{a_j,x_j}^\down(b) = \bot$.  Therefore, $g(b) = \bot$, for all $b\in  {B_\lambda}$.
    
    \item Furthermore, due to $I_1\neq \varnothing$, if $b\in B^c_\lambda$, then $\bigwedge_{i\in I_1} \phi_{a_i,x_i}^\down(b) = \bot$. Consequently, $g(b) = \bot$, for all $b\in  {B^c_\lambda}$.
\end{itemize}
Resulting in $g = g_\bot$ which contradicts the hypothesis that  the concept is not the bottom concept of the concept lattice.
Thus, $I_2= \varnothing$ and we obtain the  results.
\qed

\end{proof}

The following example illustrates the previous result.

\begin{example}\label{ex.zerodiv2}
    Returning to Example~\ref{ex.defnozerodiv} and considering the context\break $(A,B,R,\sigma')$, the list of multi-adjoint concepts is given on the left side of Figure~\ref{fig.primaconcepts}. In this case, we only have a decomposition into independent subcontexts $\{(A_\lambda, B_\lambda, R_{\lambda}, \sigma'_{\lambda})\mid \lambda\in\{1,6\}\}$. Let us consider the concept $C_2 =\fcon{\{b_3/0.2\}}{\{ a_2/1, a_3/0.8 \}}$. This concept can be expressed as infimum of two $\wedge$-irreducible concepts, in particular, $C_2 = C_{13}\wedge C_8$, as Figure~\ref{fig.primaconcepts} shows. Moreover, it can be verified that $C_{8} = \fcon{\phi_{a_3,0.8}^\down}{\phi_{a_3,0.8}^{\down\up}}$ and $C_{13} = \fcon{\phi_{a_2,0.4}^\down}{\phi_{a_2,0.4}^{\down\up}} = \fcon{\phi_{a_2,0.2}^\down}{\phi_{a_2,0.2}^{\down\up}}$. Therefore, the attributes generating these concepts belong to the same independent subcontext $(A_6, B_6, R_{6}, \sigma'_{6})$.
    \begin{figure}[!ht]
    \begin{minipage}{0.3\textwidth}
        \begin{align*}
            C_0 &= \fcon{\{\}}{\{a_1/1, a_2/1, a_3/1\}}\\
            C_1 &= \fcon{\{b_1/0.6, b_2/0.8\}}{\{a_1/1 \}}\\
            C_2 &= \fcon{\{b_3/0.2\}}{\{ a_2/1, a_3/0.8 \}} \\
            C_3 &= \fcon{\{ b_4/1\}}{\{  a_3/1 \}}\\
            C_4 &= \fcon{\{b_1/0.6, b_2/1\}}{\{ a_1/0.8\}}\\
            C_5 &= \fcon{\{b_1/1, b_2/1\}}{\{ a_1/0.6 \}}\\
            C_6 &= \fcon{\{b_1/1, b_2/1, b_3/1, b_4/1\}}{\{ \}}\\
            C_7 &= \fcon{\{b_3/0.4\}}{\{  a_2/1, a_3/0.6 \}}\\
            C_8 &= \fcon{\{b_3/0.2, b_4/1\}}{\{  a_3/0.8 \}}\\
            C_9 &= \fcon{\{b_3/0.6\}}{\{ a_2/0.4, a_3/0.4 \}}\\
            C_{10} &= \fcon{\{b_3/0.4, b_4/1\}}{\{   a_3/0.6 \}}\\
            C_{11} &= \fcon{\{b_3/0.8\}}{\{  a_2/0.4, a_3/0.2 \}}\\
            C_{12} &= \fcon{\{b_3/0.6, b_4/1\}}{\{   a_3/0.4 \}}\\
            C_{13} &= \fcon{\{b_3/1\}}{\{  a_2/0.4 \}}\\
            C_{14} &= \fcon{\{b_3/0.8, b_4/1\}}{\{   a_3/0.2 \}}
        \end{align*}
        \end{minipage}
        \begin{minipage}{0.5\textwidth}
            \begin{center} 
            \tikzstyle{place}=[ellipse, align=center,minimum width=30pt, draw=black!75,fill=white!20, text width= 17pt]
		\begin{tikzpicture}[inner sep=0.75mm,scale=0.9, every node/.style={scale=0.9}]			
			\node at (0,0) (0) [place] {$C_0$};
			\node at (-1.5,1.5) (1) [place] {$C_1$};
			\node at (0,1.5) (2) [place] {$C_2$};
			\node at (1.5,1.5) (3) [place] {$C_3$};
                \node at (-1.5,3) (4) [place] {$C_4$};
                \node at (-1.5,6) (5) [place] {$C_5$};
                \node at (0,9) (6) [place] {$C_6$};
                \node at (0,3) (7) [place] {$C_7$};
                \node at (0,4.5) (9) [place] {$C_9$};
                \node at (0,6) (11) [place] {$C_{11}$};
                \node at (0,7.5) (13) [place] {$C_{13}$};
                \node at (1.5,3) (8) [place] {$C_8$};
                \node at (1.5,4.5) (10) [place] {$C_{10}$};
                \node at (1.5,6) (12) [place] {$C_{12}$};
                \node at (1.5,7.5) (14) [place] {$C_{14}$};
			
			\draw [-] (0) -- (1)--(4)--(5);
                \draw [-] (5.north) to[bend left=10] (6.south west);
			\draw [-] (0) -- (2)--(7)--(9)--(11)--(13)--(6);
			\draw [-] (0) -- (3)--(8)--(10)--(12)--(14)--(6);
                \draw [-] (2)--(8);\draw [-] (7)--(10);\draw [-] (9)--(12);\draw [-] (11)--(14);
		\end{tikzpicture}
            \end{center}
        \end{minipage}
        \caption{List of multi-adjoint concepts of the context $(A,B,R,\sigma')$ and its associated multi-adjoint concept lattice.}
        \label{fig.primaconcepts}
    \end{figure}
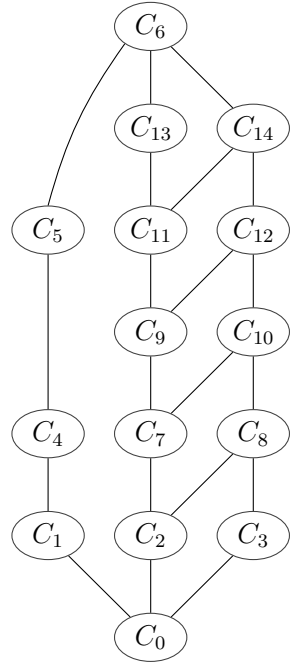
\qed
\end{example}

The definition and results above have formally fixed the notion of independent subcontexts in the multi-adjoint framework, in which the mapping $\sigma$ plays a fundamental role. The following section will relate this notion with the blocks of the concept lattice associated with the original context.

\section{Relationship between blocks of concepts and independent subcontexts}\label{sec.blocks-sci}

The goal of this section is to analyze the existing relationships between the notions presented in the two previous sections. Specifically, we are interested in discovering
when a fuzzy normalized context contains subcontexts associated with blocks of {concepts}
of the multi-adjoint concept lattice. From now on, the same algebraic structure as in the previous section will be considered.

The two following results relate the independent subcontexts of a decomposition of a context to the blocks of the associated multi-adjoint concept lattice. In particular, the following one takes into account the previous result to determine complete blocks of concepts from independent subcontexts. 

\begin{proposition}\label{prop.block}

Given the multi-adjoint frame $\cL$ 
and the context $\cC_\n$ that 
has a decomposition into independent subcontexts ${\{(A_\lambda, B_\lambda, R_\lambda, \sigma_\lambda)\mid\lambda\in\Lambda\}}$, then the set 
$$
K_\lambda = \{ \fcon{g}{f}\in\cM_\n \mid \fcon g f =  \bigwedge M_g^{A_\lambda}\}\cup\{ \fcon{g_\top}{f_\bot}, \fcon{g_\bot}{ f_\top}\}
$$
is a complete  block of $\cM_\n$, for all  $\lambda \in \Lambda$.
 
\end{proposition}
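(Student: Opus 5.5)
The plan is to verify the conditions of Definition~\ref{block} directly for $K_\lambda$, using Proposition~\ref{prop.mi} as the main tool. By Proposition~\ref{prop.xconcept}, the top and bottom of $\cM_\n$ are $\fcon{g_\top}{f_\bot}$ and $\fcon{g_\bot}{f_\top}$. Both are in $K_\lambda$ by construction, so once $K_\lambda$ is shown to be a block it is automatically complete.

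\emph{Step 1: non-triviality.} Pick $a\in A_\lambda$ and $x\in L\setminus\{\bot\}$. By Proposition~\ref{prop.xconcept}, the concept $C=\fcon{\phi_{a,x}^\down}{\phi_{a,x}^{\down\up}}$ is neither top nor bottom. By Proposition~\ref{prop.mi}, $C=\bigwedge M_{\phi_{a,x}^\down}^{A_\mu}$ for some $\mu$ with $M_{\phi_{a,x}^\down}^{A_\mu^c}=\varnothing$. We need $\mu=\lambda$. Lemma~\ref{lem.phi} shows that $\phi_{a,x}^\down$ vanishes outside $B_\lambda$ and is nonzero somewhere on $B_\lambda$ (by normalization, as in Proposition~\ref{prop.xconcept}). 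If $\mu\neq\lambda$, then every $\wedge$-irreducible above $C$ is generated by attributes of $A_\mu$, so its extent vanishes on $B_\lambda$ by Lemma~\ref{lem.phi}. Their meet would then have extent vanishing on $B_\lambda$ as well, a contradiction. Hence $C\in K_\lambda\setminus\{\top,\bot\}$. The same argument gives a uniqueness fact: for a non-extremal concept, the index $\lambda$ in Proposition~\ref{prop.mi} is unique. Indeed, if $M_g^{A_\lambda^c}=\varnothing$ and $g=\bigwedge M_g^{A_\mu}$ with $\mu\neq\lambda$, then $M_g^{A_\mu}\subseteq M_g^{A_\lambda^c}=\varnothing$, so $g=g_\top$, which is excluded.

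\emph{Step 2: closure under $\uparrow$ and $\downarrow$.} Let $k=\fcon g f\in K_\lambda$ be non-extremal, and let $\fcon{g'}{f'}$ be non-extremal and comparable with $k$. By Proposition~\ref{prop.mi}, $\fcon{g'}{f'}=\bigwedge M_{g'}^{A_\mu}$ with $M_{g'}^{A_\mu^c}=\varnothing$, and we must show $\mu=\lambda$.
\begin{itemize}
\item If $k\preceq\fcon{g'}{f'}$, then $M_{g'}^{A_\mu}\subseteq M_g$. If $\mu\ne\lambda$, these irreducibles lie in $M_g^{A_\lambda^c}=\varnothing$. Then $M_{g'}^{A_\mu}$ is empty, forcing $\fcon{g'}{f'}$ to be the top, which is a contradiction.
\item If $\fcon{g'}{f'}\preceq k$, then $M_g^{A_\lambda}\subseteq M_{g'}$. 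Since $M_{g'}^{A_\mu^c}=\varnothing$, this forces $\lambda=\mu$; otherwise $M_g^{A_\lambda}=\varnothing$ and $k$ would be the top.
\end{itemize}

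\emph{Step 3: sublattice.} For $k_1,k_2\in K_\lambda$ we have $k_1\wedge k_2\preceq k_1$ and $k_1\preceq k_1\vee k_2$. If $k_1$ is extremal, the meet and join are $k_1$, $k_2$, top or bottom, all of which lie in $K_\lambda$. Otherwise Step 2 places both in $K_\lambda$, exactly as in the proof of Proposition~\ref{prop:L-K}. Finally $K_\lambda\neq\cM_\n$, because $|\Lambda|\ge2$ (Remark~\ref{remark.cardsubcontext}). Running Step 1 for some $\mu\neq\lambda$ gives a non-extremal concept assigned to $\mu$, which by the uniqueness in Step 1 is not in $K_\lambda$.

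The main obstacle is Step 1: identifying the correct index for fuzzy-attribute concepts and establishing uniqueness of the index from Proposition~\ref{prop.mi}. Both rely on the support argument from Lemma~\ref{lem.phi} together with normalization.
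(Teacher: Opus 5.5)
Your proof is correct and follows essentially the paper's route: both verify Definition~\ref{block} directly from Proposition~\ref{prop.mi} and the antitonicity $\fcon{g}{f}\preceq\fcon{g'}{f'}\Rightarrow M_{g'}^{A'}\subseteq M_{g}^{A'}$. The only organizational difference is that you derive the sublattice property from the up/down closure (as in Proposition~\ref{prop:L-K}), whereas the paper handles the join directly and calls the closure straightforward. Your version is also more complete, because it makes explicit three points the paper passes over:
\begin{itemize}
\item that \emph{every} $K_\lambda$, not merely some, contains a non-extremal concept, which needs your support argument via Lemma~\ref{lem.phi} and not just Propositions~\ref{prop.xconcept} and~\ref{prop.mi};
\item the uniqueness of the index in Proposition~\ref{prop.mi}, which the paper uses implicitly when it asserts $M_{g_1}^{A_\lambda^c}=\varnothing$;
\item the properness $K_\lambda\neq\cM_\n$.
\end{itemize}
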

\begin{proof}

First of all, it is clear that $K_\lambda  \setminus\{ \fcon{g_\top}{f_\bot}, \fcon{g_\bot}{ f_\top}\} $ is not empty for any $\lambda\in \Lambda$ by Proposition~\ref{prop.xconcept} and Proposition~\ref{prop.mi}.

Then, we will verify that $K_\lambda$ is a sublattice of $\cM_\n$. Given ${\fcon{g_1}{f_1}}, \fcon{g_2}{f_2}\in K_\lambda$, it is clear that ${\fcon{g_1}{f_1}} \wedge \fcon{g_2}{f_2}\in K_\lambda$. 
Moreover, ${\fcon{g_1}{f_1}} \vee \fcon{g_2}{f_2}\in \cM_\n$ is a concept which will be denoted as $\fcon{g_3}{f_3}$,    that is, $\fcon{g_1}{f_1} \vee \fcon{g_2}{f_2}= \fcon{g_3}{f_3}$.
If $\fcon{g_1}{f_1}\preceq \fcon{g_2}{f_2}$ or $\fcon{g_1}{f_1}\preceq \fcon{g_2}{f_2}$ or $\fcon{g_3}{f_3}= \fcon{g_\top}{f_\bot}$, then we trivially have that $g_3\in K_\lambda$. Otherwise, by Proposition~\ref{prop.mi}, we have that  $M_{g_1}^{A_\lambda^c} = \varnothing$ and $M_{g_2}^{A_\lambda^c} = \varnothing$ which implies that $M_{g_3}^{A_\lambda^c} = \varnothing$. Therefore, $\fcon{g_3}{f_3} = \bigwedge M_{g_3}^{A_\lambda} $,
that is, $\fcon{g_3}{f_3}\in K_\lambda$. Consequently, $K_\lambda$ is a sublattice of $\cM_\n$. 
Finally, it only remains to be verified that 
$$
(\up \fcon g f~\cup \down \fcon g f)\setminus\{ \fcon{g_\top}{f_\bot}, \fcon{g_\bot}{ f_\top}\} \subseteq K_\lambda
$$
for all $\fcon g f\in K_\lambda$, but this fact straightforwardly holds by the definition of $K_\lambda$ and Proposition~\ref{prop.mi}. 
\qed

\end{proof}

As a consequence of the previous result, when a decomposition into independent subcontexts exists, the associated concept lattice can be decomposed into independent blocks of concepts, as the following result states.

\begin{theorem}\label{th.isubc_iblock}

Given the multi-adjoint frame $\cL$ and the context $\cC_\n$ which has a decomposition into independent subcontexts, then $\cM_\n$ has a decomposition into independent blocks. Specifically, if  $\{(A_\lambda, B_\lambda, R_\lambda, \sigma_\lambda)\mid \lambda \in \Lambda\}$ is a decomposition into independent subcontexts of $\cC_\n$, then the family $\{K_\lambda\}_{\lambda\in \Lambda}$, where
$$
K_\lambda = \{ \fcon{g}{f}\in\cM_\n \mid \fcon g f =  \bigwedge M_g^{A_\lambda}\}\cup\{ \fcon{g_\top}{f_\bot}, \fcon{g_\bot}{ f_\top}\}
$$
for all $\lambda\in \Lambda$, is a decomposition into independent blocks of $\cM_\n$.
\end{theorem}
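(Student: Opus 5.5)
By Proposition~\ref{prop.block}, each $K_\lambda$ is a complete block of $\cM_\n$. So only two things remain to be shown: the family $\{K_\lambda\}_{\lambda\in\Lambda}$ is pairwise independent in the sense of Definition~\ref{b_independent}, and $\bigcup_{\lambda\in\Lambda}K_\lambda=\cM_\n$. Note that by Remark~\ref{remark.cardsubcontext} the index set $\Lambda$ has at least two elements, so the family is non-trivial.

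\emph{Covering.} Take $\fcon g f\in\cM_\n$. If $g\in\{g_\top,g_\bot\}$, then $\fcon g f$ is the top or the bottom concept (Proposition~\ref{prop.xconcept}), and it lies in every $K_\lambda$ by construction. Otherwise Proposition~\ref{prop.mi} gives some $\lambda\in\Lambda$ with $\fcon g f=\bigwedge M_g^{A_\lambda}$, so $\fcon g f\in K_\lambda$. Hence the union is the whole lattice.

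\emph{Independence.} Let $\lambda\neq\mu$ and suppose there is a concept $\fcon g f\in (K_\lambda\cap K_\mu)\setminus\{\fcon{g_\top}{f_\bot},\fcon{g_\bot}{f_\top}\}$. Then $g\neq g_\top,g_\bot$, and $\fcon g f=\bigwedge M_g^{A_\lambda}=\bigwedge M_g^{A_\mu}$. We cannot have $M_g^{A_\lambda}=\varnothing$, since then the empty meet would give $\fcon g f$ equal to the top concept. So $M_g^{A_\lambda}\neq\varnothing$, and likewise $M_g^{A_\mu}\neq\varnothing$. Because $A_\mu\subseteq A_\lambda^c$ (the $A_\lambda$ are pairwise disjoint), this gives $M_g^{A_\lambda^c}\supseteq M_g^{A_\mu}\neq\varnothing$.

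This is the main obstacle, since Proposition~\ref{prop.mi} only asserts that \emph{some} $\lambda$ works, not that it is unique. I would therefore rerun the argument in its proof directly. Choose $\fcon{\phi_{a,x}^\down}{\phi_{a,x}^{\down\up}}\in M_g^{A_\lambda}$ and $\fcon{\phi_{a',x'}^\down}{\phi_{a',x'}^{\down\up}}\in M_g^{A_\mu}$. These are irreducible concepts other than the top, so $x,x'\neq\bot$ (since $\phi_{a,\bot}^\down=g_\top$). By Lemma~\ref{lem.phi}, $\phi_{a,x}^\down$ vanishes outside $B_\lambda$ and $\phi_{a',x'}^\down$ vanishes outside $B_\mu$. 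Since $g\preceq\phi_{a,x}^\down\wedge\phi_{a',x'}^\down$ and $B_\lambda\cap B_\mu=\varnothing$, we get $g=g_\bot$, a contradiction. Thus $(K_\lambda\cap K_\mu)\subseteq\{\fcon{g_\top}{f_\bot},\fcon{g_\bot}{f_\top}\}$.

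With Proposition~\ref{prop.block} providing the block structure, the family $\{K_\lambda\}_{\lambda\in\Lambda}$ is a family of independent blocks whose union is $\cM_\n$, that is, a decomposition into independent blocks.
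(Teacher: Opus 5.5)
Your proof is correct and follows the paper's argument. You use Proposition~\ref{prop.block} for the block structure and Proposition~\ref{prop.mi} for the covering, and for independence you use the disjoint-support argument via Lemma~\ref{lem.phi}, which is exactly the content of the proof of Proposition~\ref{prop.mi}. The rerun was not needed, though: the clause $M_g^{A_\lambda^c}=\varnothing$ in Proposition~\ref{prop.mi} already rules out having both $M_g^{A_\lambda}$ and $M_g^{A_\mu}$ non-empty for $\lambda\neq\mu$, and this is how the paper concludes directly.
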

\begin{proof}

Let us consider a decomposition into independent subcontexts $\{(A_\lambda, B_\lambda, R_\lambda, \sigma_\lambda)\mid \lambda \in \Lambda\}$ of $\cC_\n$ where $\Lambda$ is an index set with at least two elements ($|  \Lambda |\geq2$, see Remark~\ref{remark.cardsubcontext}) and $\bigcup_{\lambda \in \Lambda} A_\lambda = A$,  $\bigcup_{\lambda \in \Lambda} B_\lambda = B$.
Hence, by Proposition~\ref{prop.block}, we have that
$$
K_\lambda = \{ \fcon{g}{f}\in\cM_\n \mid \fcon g f =  \bigwedge M_g^{A_\lambda}\}\cup\{ \fcon{g_\top}{f_\bot}, \fcon{g_\bot}{ f_\top}\}
$$
is a block, for all $\lambda\in \Lambda$.
 
Now, we will prove that they are independent. 
Given $\alpha,\beta\in \Lambda$, if there exists a concept $\fcon{g}{f} \in \cM_\n$ such that 
$\fcon{g}{f} \in (K_\alpha\cap K_\beta)\setminus \{ \fcon{g_\top}{f_\bot}, \fcon{g_\bot}{ f_\top}\}$, then 
$\fcon{g}{f}$ has a non-trivial decomposition of $\wedge$-irreducible elements in $M_F^{A_{\alpha}}$ and in $M_F^{A_{\beta}}$, which contradicts Proposition~\ref{prop.mi}. Thus, 
$$
(K_\alpha\cap K_\beta)\setminus \{ \fcon{g_\top}{f_\bot}, \fcon{g_\bot}{ f_\top}\} = \varnothing
$$
As a consequence, $\{K_\lambda\subseteq \cM_\n\mid \lambda\in \Lambda\}$  is a set of independent blocks and, by Corollary~\ref{prop.btoindepeb}, we have that $\cM_\n$ can be decomposed into independent blocks.
Specifically, the family $\{K_\lambda\}_{\lambda\in\Lambda}$ is a decomposition into independent blocks of $\cM_\n$ since for any concept $\fcon{g}{ f}\in \cM_\n$, by Proposition~\ref{prop.mi}, there exists $\lambda\in \Lambda$, such that $\fcon g f=\bigwedge M_g^{A_\lambda}$ which   straightforwardly  implies  that $\fcon g f\in K_\lambda$ and thus $\bigcup_{\lambda \in \Lambda} K_\lambda =  \cM_\n$.
\qed
    
\end{proof}

Let us come back to Example~\ref{ex.zerodiv2} to illustrate the previous results.

\begin{example}
    We can easily check that the sets $K_\lambda$ (described in Proposition~\ref{prop.block}) that we can obtain from the decomposition into independent subcontexts $\{(A_\lambda, B_\lambda, R_{\lambda}, \sigma'_{\lambda})\mid \lambda\in\{1,6\}\}$ are the following:
    \begin{align*}
        K_1 =& \{C_0, C_1, C_4, C_5, C_6\}\\
        K_6 =& \{C_0, C_2, C_3, C_7, C_8, C_9, C_{10}, C_{11}, C_{12}, C_{13}, C_{14}, C_6 \}
    \end{align*}
    As we can observe in Figure~\ref{fig.primaconcepts}, both sets $K_1$ and $K_6$ are complete blocks. Indeed, these blocks are independent and form a decomposition into independent blocks of the multi-adjoint concept lattice.\qed
\end{example}

Now, we are {interested} in  the other implication, that is, determining a decomposition of independent subcontexts from a given multi-adjoint concept lattice containing independent blocks. 
Given a family of blocks $\{K_\mu\}_{\mu\in\Lambda}$  of  the multi-adjoint concept lattice~$\cM_\n$, then we can define for every $\mu\in \Lambda$ the sets
\begin{eqnarray*}
A_\mu &=& \{a\in A \mid \fcon{\phi_{a,x}^\down}{\phi_{a,x}^{\down\up}}\in K_\mu^*, \hbox{ with } x\in L\}\\
B_\mu &=& \{b\in B \mid \fcon{\phi_{b,y}^{\up\down}}{\phi_{b,y}^{\up}}\in K_\mu^*, \hbox{ with } y\in L\} 
\end{eqnarray*}
where $K_\mu^*=K_\mu\setminus \{\fcon{g_\top}{f_\bot}, \fcon{g_\bot}{ f_\top}\}$.

The following result provides a sufficient condition in order to ensure that  these sets form a partition of the corresponding sets.

\begin{proposition}\label{prop.partition}
Given the multi-adjoint frame $\cL$ and the context $\cC_\n$, if $\cM_\n$ has a  decomposition on independent blocks $\{K_\mu\}_{\mu\in\Lambda}$, then the sets $\{A_\mu\mid \mu\in \Lambda\}$ 
and $\{B_\mu\mid \mu\in \Lambda\}$  form a partition of the set of attributes $A$ and the  set of objects $B$, respectively. 
\end{proposition}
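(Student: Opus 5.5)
To prove the statement I will check three properties for the attributes, and then get the objects by the dual argument. First, every attribute lies in some $A_\mu$. Second, each $A_\mu$ is non-empty. Third, $A_\mu\cap A_\nu=\varnothing$ for $\mu\neq\nu$.

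For coverage, take $a\in A$ and any $x\in L\setminus\{\bot\}$; for instance $x=\top$ works, since $L$ has at least two elements. By Proposition~\ref{prop.xconcept}, the concept $\fcon{\phi_{a,x}^\down}{\phi_{a,x}^{\down\up}}$ is neither $\fcon{g_\top}{f_\bot}$ nor $\fcon{g_\bot}{f_\top}$. Since $\bigcup_{\mu}K_\mu=\cM_\n$, it lies in some $K_\mu^*$, hence $a\in A_\mu$.

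For non-emptiness, pick $\fcon gf\in K_\mu^*$, which exists by the definition of block. By Proposition~\ref{prop:acc}, $\fcon gf$ is the meet of the $\wedge$-irreducible concepts above it. There is at least one such concept because $\fcon gf$ is not the top. By Theorem~\ref{th:and:irred}, it has the form $\fcon{\phi_{a,x}^\down}{\phi_{a,x}^{\down\up}}$, and it is different from the top. It is also different from the bottom, since it lies above $\fcon gf\neq$ bottom. So it belongs to $\uparrow\fcon gf\setminus\{\fcon{g_\top}{f_\bot},\fcon{g_\bot}{f_\top}\}\subseteq K_\mu$, and therefore $a\in A_\mu$.

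The main step is disjointness, and the issue is that the membership of $a$ in $A_\mu$ is witnessed by some degree $x$, while membership in $A_\nu$ might be witnessed by a different degree $x'$. Suppose $\fcon{\phi_{a,x}^\down}{\phi_{a,x}^{\down\up}}\in K_\mu^*$ and $\fcon{\phi_{a,x'}^\down}{\phi_{a,x'}^{\down\up}}\in K_\nu^*$. Neither $x$ nor $x'$ is $\bot$, because $\phi_{a,\bot}=f_\bot$ gives the top concept. The map $x\mapsto\phi_{a,x}^\down$ is order-reversing, as $\uimp$ is order-preserving in its first argument. So the two concepts are comparable, and we may assume $x\le x'$, so that $\phi_{a,x'}^\down\preceq\phi_{a,x}^\down$. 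The concept for $x'$ is not the bottom, and it lies in $\downarrow$ of the concept for $x$. Hence it belongs to $K_\mu$, as well as to $K_\nu$. Independence of $K_\mu$ and $K_\nu$ then forces $\mu=\nu$.

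The objects are handled dually. Here $\phi_{b,y}^\up$ is order-reversing in $y$, so $y\mapsto\phi_{b,y}^{\up\down}$ is order-preserving, and the comparable object concepts are again non-trivial by Proposition~\ref{prop.xconcept}. The block property applies in the same way, using $\uparrow$ instead of $\downarrow$. Non-emptiness of $B_\mu$ uses the dual of Proposition~\ref{prop:acc}: every concept is a join of join-irreducible object-generated concepts, which follows from the supremum-density in Theorem~\ref{fundamentalTh}. Concretely, for $\fcon gf\in K_\mu^*$ some non-trivial $\fcon{\phi_{b,y}^{\up\down}}{\phi_{b,y}^{\up}}\preceq\fcon gf$ exists, and it lies in $K_\mu$ by the $\downarrow$ closure.
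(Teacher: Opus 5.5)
Your coverage argument matches the paper's, and your non-emptiness check (which the paper omits) is fine, but the disjointness step has a genuine gap. From the antitonicity of $x\mapsto\phi_{a,x}^\down$ you conclude that the concepts generated by $(a,x)$ and $(a,x')$ are comparable, and you ``assume $x\le x'$''. That step is only valid when $L$ is a chain. In $\cL$, however, $L$ is an arbitrary complete lattice, so $x$ and $x'$ may be incomparable, and then the two concepts may be incomparable as well.

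For example, take the Boolean lattice $\{\bot,p,q,\top\}$ with $\adjoint=\wedge$, $R(a,b_1)=p$ and $R(a,b_2)=q$. Then $\phi_{a,p}^\down(b_1)=\top$, $\phi_{a,p}^\down(b_2)=q$, $\phi_{a,q}^\down(b_1)=p$ and $\phi_{a,q}^\down(b_2)=\top$. Neither extent lies below the other, so your argument produces no common non-trivial element of $K_\mu$ and $K_\nu$. Your dual argument for objects relies on the same unjustified comparability. A smaller point: the antitonicity comes from $\uimp$ being order-reversing in its second argument, not order-preserving in its first.

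The missing idea is to use a common bound of the two concepts instead of comparing them with each other. The paper takes $x\vee x'$. The concept generated by $(a,x\vee x')$ lies below both concepts, so the $\downarrow$-closure puts it in both $K_\mu$ and $K_\nu$. It is non-trivial by Proposition~\ref{prop.xconcept}, since $x\vee x'\neq\bot$. This contradicts the independence of $K_\mu$ and $K_\nu$.

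It is even simpler to use $\top$. The concept $\fcon{\phi_{a,\top}^\down}{\phi_{a,\top}^{\down\up}}$ lies below every concept generated by $a$ and is non-trivial by Proposition~\ref{prop.xconcept}. Dually, $\fcon{\phi_{b,\top}^{\up\down}}{\phi_{b,\top}^{\up}}$ is a common non-trivial upper bound of all concepts generated by $b$, and the $\uparrow$-closure handles the object case.

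With this repair your proof is complete. For the non-emptiness of $B_\mu$, note that the supremum-density of the object-generated concepts is enough on its own. A representation by join-irreducibles would require the descending chain condition, which is not assumed here.
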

\begin{proof}
Since $\cM_\n$ has a decomposition on independent blocks $\{K_\mu\}_{\mu\in\Lambda}$, we have that $\bigcup_{\mu\in \Lambda}K_\mu = \cM_\n$. Therefore, by Proposition~\ref{prop.xconcept}, every $a\in A$ belong to a subset $A_\mu$. 

Let us prove that $ A_\lambda\cap A_\mu=\varnothing$, for all $\lambda,\mu\in \Lambda$, with $\lambda\neq \mu$.
Given $a\in A$, if there exists $\lambda$ and $\mu$, such that $a\in A_\lambda\cap A_\mu$, then there exist
$x_i,x_j\in L$ such that $\fcon{\phi_{a,x_i}^\down}{\phi_{a,x_i}^{\down\up}}\in K^*_\lambda$ and  $\fcon{\phi_{a,x_j}^\down}{\phi_{a,x_j}^{\down\up}}\in K^*_\mu$.
Therefore, we have that $x_i\vee x_j\in L$, $\phi_{a,x_i} \leq \phi_{a,x_i\vee x_j}$ and $\phi_{a,x_j} \leq \phi_{a,x_i\vee x_j}$. Hence, by the monotonicity of the operator $^\down$, we obtain that $\phi_{a,x_i\vee x_j}^\down \leq \phi_{a,x_i}^\down$ and $\phi_{a,x_i\vee x_j}^\down \leq \phi_{a,x_j}^\down$ and this implies, by the definition of block, that $\fcon{\phi_{a,x_i\vee x_j}^\down}{\phi_{a,x_i\vee x_j}^{\down\up}}\in K_\lambda$ and $\fcon{\phi_{a,x_i\vee x_j}^\down}{\phi_{a,x_i\vee x_j}^{\down\up}}\in K_\mu$. Now, we consider the following cases:
\begin{itemize}
    \item If $\phi_{a,x_i\vee x_j}^\down = g_\top$, then ${\phi_{a,x_i}^\down}= g_\top= \phi_{a,x_j}^\down$, which contradicts the assumption on $\fcon{\phi_{a,x_i}^\down}{\phi_{a,x_i}^{\down\up}}\in K^*_\lambda$ and  $\fcon{\phi_{a,x_j}^\down}{\phi_{a,x_j}^{\down\up}}\in K^*_\mu$. 

    \item If $\phi_{a,x_i\vee x_j}^\down = g_\bot$, then  $ \phi_{a,\top}^\down \leq \phi_{a,x_i\vee x_j}^\down =g_\bot$.  Hence,  $\phi_{a,\top}^\down = g_\bot$ which is a contradiction since, 
    by Proposition~\ref{prop.xconcept}, we have that $\phi_{a,\top}^\down \neq g_\bot$.
    
    \item  Otherwise, it contradicts the fact of being independent blocks.
\end{itemize}
Analogously, we obtain a partition of the set of objects by means of the sets~$B_\mu$.   
\qed
\end{proof}

The following lemma shows a relationship between the partitions of attributes and objects given in the previous result.

\begin{lemma}\label{lem.indices}
    Given the multi-adjoint frame $\cL$, the context $\cC_\n$ and the partitions $\{A_\mu\mid \mu\in \Lambda\}$ 
and $\{B_\mu\mid \mu\in \Lambda\}$ obtained from a decomposition on independent blocks $\{K_\mu\}_{\mu\in\Lambda}$ of $\cM_\n$,
if $R(a,b)\neq \bot$, with $a\in A$ and $b\in B$, then there exists $\mu\in \Lambda$ such that $a\in A_{\mu}$ and $b\in B_{\mu}$.
\end{lemma}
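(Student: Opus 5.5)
The plan is to produce a single concept that is comparable with both an attribute-concept of $a$ and an object-concept of $b$. This concept lies strictly between the bottom and the top of $\cM_\n$, so the block axioms force everything into one block. We take $a\in A_\lambda$ and $b\in B_\nu$, which exist by Proposition~\ref{prop.partition}, and aim to prove $\lambda=\nu$.

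\textbf{Step 1.} Choose $x=\top$ and $y=R(a,b)$. By the boundary condition, $\top\adjoint_{\sigma(a,b)} R(a,b)=R(a,b)\leq R(a,b)$. Corollary~\ref{cor.threpresent} then gives
$$\fcon{\phi_{b,y}^{\up\down}}{\phi_{b,y}^{\up}}\preceq\fcon{\phi_{a,\top}^{\down}}{\phi_{a,\top}^{\down\up}}.$$
Since $y=R(a,b)\neq\bot$ and $\top\neq\bot$, Proposition~\ref{prop.xconcept} shows that neither concept is the top or the bottom of $\cM_\n$.

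\textbf{Step 2.} Place these two concepts in the right blocks, using the definitions of $A_\lambda$ and $B_\nu$. Membership $a\in A_\lambda$ only provides some $x_0$ with $\fcon{\phi_{a,x_0}^\down}{\phi_{a,x_0}^{\down\up}}\in K_\lambda^*$, so we must move from $x_0$ to $\top$. We have $\phi_{a,\top}^\down\preceq\phi_{a,x_0}^\down$, so the concept for $\top$ lies in $\downarrow$ of a nontrivial element of $K_\lambda$. It is not the bottom or top concept, by Proposition~\ref{prop.xconcept} as in Step~1. The block definition therefore gives $\fcon{\phi_{a,\top}^\down}{\phi_{a,\top}^{\down\up}}\in K_\lambda^*$.

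The object side is handled the same way. Membership $b\in B_\nu$ gives some $y_0$ with the object concept in $K_\nu^*$. The elements $\phi_{b,y_0}$ and $\phi_{b,y}$ are comparable via $y_0\vee y$ or $y_0\wedge y$ in the appropriate direction. The cleanest route is to pass through $\phi_{b,y_0\wedge y}$ if $y_0\wedge y\neq\bot$. If $y_0\wedge y=\bot$, we pass through $\phi_{b,\top}$, which is above both. Its concept is not the top by Proposition~\ref{prop.xconcept}, so it lies in $\uparrow$ of a nontrivial element of $K_\nu$, and hence in $K_\nu^*$. Since the object concept for $y$ is below the one for $\top$, it is also in $K_\nu^*$.

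\textbf{Step 3.} Now $\fcon{\phi_{b,y}^{\up\down}}{\phi_{b,y}^{\up}}\in K_\nu^*$ lies below $\fcon{\phi_{a,\top}^{\down}}{\phi_{a,\top}^{\down\up}}\in K_\lambda^*$. By the block property of $K_\lambda$, the object concept also belongs to $K_\lambda$, and it is nontrivial. Hence it lies in $(K_\lambda\cap K_\nu)\setminus\{\bot,\top\}$. Independence of the blocks then forces $\lambda=\nu$, so $\mu=\lambda$ works.

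The main obstacle is Step~2: the sets $A_\mu$ and $B_\mu$ are defined through \emph{some} degree, so we have to propagate block membership along the degree chain. The key is that each step stays among nontrivial concepts, which Proposition~\ref{prop.xconcept} guarantees for every nonzero degree.
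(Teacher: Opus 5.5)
Your proof is correct and follows the paper's argument: you take $x=\top$ and $y=R(a,b)$, apply Corollary~\ref{cor.threpresent} and Proposition~\ref{prop.xconcept}, and then use the block property of $K_\lambda$. You also make explicit the step from some degree $x_0$ to $\top$, which the paper leaves implicit. The object-side half of Step~2 and the independence argument in Step~3 are redundant, though: once $\fcon{\phi_{b,y}^{\up\down}}{\phi_{b,y}^{\up}}\in K_\lambda^*$, the definition of $B_\lambda$ gives $b\in B_\lambda$ directly, which is how the paper concludes.
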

\begin{proof}
Let us consider an attribute $a\in A$ and an object $b\in B$ such that $R(a,b)\neq\bot$. 
Therefore, by Proposition~\ref{prop.partition}, there exists $\mu\in \Lambda$ such that $a\in A_\mu$.
Now, we  consider $y = R(a,b)$ and it is clear that $\top\adjoint_{\sigma(a,b)} y \leq R(a,b)$, by the boundary condition. Therefore, applying Corollary~\ref{cor.threpresent}, the inequality $\phi_{b,y}^{\up\down}\leq \phi_{a,\top}^\down$ holds. In addition, by Proposition~\ref{prop.xconcept}, we know that $g_\top\neq \phi_{a,\top}^\down \neq g_\bot$. Then, since $a\in A_\mu$ and by Definition~\ref{block}, we have that $\fcon{\phi_{b,y}^{\up\down}}{\phi_{b,y}^\up}\in K_\mu^*$, and  therefore, $b\in B_\mu$.\qed

\end{proof}

Now, in order to illustrate the previous results, we will come back to Example~\ref{ex.defnozerodiv} to build partitions of the sets of attributes and objects from the independent blocks of the concept lattice. 

\begin{example}\label{ex.blocktosubc}
Let us consider the multi-adjoint concept lattice, which is depicted in Figure~\ref{fig.sigmaconcepts}, associated with the context $(A,B,R,\sigma)$ of Example~\ref{ex.defnozerodiv}.       
 \begin{figure}[!ht]
    \centering
    \tikzstyle{place}=[ellipse, align=center,minimum width=30pt, draw=black!75,fill=white!20, text width= 17pt]
    \begin{tikzpicture}[inner sep=0.75mm,scale=0.8, every node/.style={scale=0.85}]			
        \node at (0,0) (0) [place] {$C_0$};
        \node at (-1.5,1.5) (1) [place] {$C_1$};
        \node at (0,1.5) (2) [place] {$C_2$};
        \node at (1.5,4.5) (3) [place] {$C_3$};
            \node at (-1.5,3) (4) [place] {$C_4$};
            \node at (-1.5,4.5) (5) [place] {$C_5$};
            \node at (0,6) (6) [place] {$C_6$};
            \node at (0,3) (7) [place] {$C_7$};
        
        \draw [-] (0) -- (1)--(4)--(5)--(6);
        \draw [-] (0) -- (2)--(7)--(6);
        \draw [-] (3)--(6);
            \draw [-] (0.north east) to[bend right=20] (3.south);
    \end{tikzpicture}
    \caption{The multi-adjoint concept lattice associated with the context of Example~\ref{ex.defnozerodiv}.}
    \label{fig.sigmaconcepts}
\end{figure}
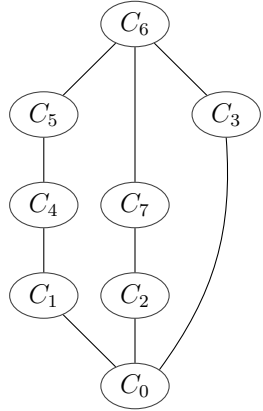

We can find several decompositions into independent blocks of the multi-adjoint concept lattice. Let us consider the one given by $\{K_\mu \mid \mu\in\{1,2,3\}\}$ where the independent blocks are the following:
$$
K_1 = \{C_0,C_1,C_4,C_5,C_6\}, \; K_2 = \{C_0,C_2,C_6,C_7\},\; \mbox{ and } \; K_3 = \{C_0,C_3,C_6\} 
$$
In addition, we need the list given in Table~\ref{tab.genconcept} which includes the multi-adjoint concepts of the multi-adjoint concept lattice (except for the bottom and the top elements) together with the fuzzy-attributes and fuzzy-objects from which these concepts are obtained.
 \begin{table}[!ht]
 \begin{align*}
        C_1 &=\fcon{\phi_{a_1,1}^\down}{\phi_{a_1,1}^{\down\up}}&&\nsp{12}=\fcon{\phi_{b_1,0.8}^{\up\down}}{\phi_{b_1,0.8}^{\up}}&&\nsp{12}=\fcon{\phi_{b_1,1}^{\up\down}}{\phi_{b_1,1}^{\up}}\\
        C_2 &=\fcon{\phi_{a_2,0.6}^\down}{\phi_{a_2,0.6}^{\down\up}}&&\nsp{12}=\fcon{\phi_{a_2,0.8}^\down}{\phi_{a_2,0.8}^{\down\up}}&&\nsp{12}=\fcon{\phi_{a_2,1}^\down}{\phi_{a_2,1}^{\down\up}}&&\nsp{12}=\fcon{\phi_{b_3,0.6}^{\up\down}}{\phi_{b_3,0.6}^{\up}}\\
        &=\fcon{\phi_{b_3,0.8}^{\up\down}}{\phi_{b_3,0.8}^{\up}}&&\nsp{12}=\fcon{\phi_{b_3,1}^{\up\down}}{\phi_{b_3,1}^{\up}}\\
        C_3 &= \fcon{\phi_{a_3,0.2}^\down}{\phi_{a_3,0.2}^{\down\up}}&&\nsp{12}=\fcon{\phi_{a_3,0.4}^\down}{\phi_{a_3,0.4}^{\down\up}}&&\nsp{12}=\fcon{\phi_{a_3,0.6}^\down}{\phi_{a_3,0.6}^{\down\up}} &&\nsp{12}=\fcon{\phi_{a_3,0.8}^\down}{\phi_{a_3,0.8}^{\down\up}}\\
        &=\fcon{\phi_{a_3,1}^\down}{\phi_{a_3,1}^{\down\up}}&&\nsp{12}=\fcon{\phi_{b_4,0.2}^{\up\down}}{\phi_{b_4,0.2}^{\up}}&&\nsp{12}=\fcon{\phi_{b_4,0.4}^{\up\down}}{\phi_{b_4,0.4}^{\up}}&&\nsp{12}=\fcon{\phi_{b_4,0.6}^{\up\down}}{\phi_{b_4,0.6}^{\up}}\\
        &=\fcon{\phi_{b_4,0.8}^{\up\down}}{\phi_{b_4,0.8}^{\up}}&&\nsp{12}=\fcon{\phi_{b_4,1}^{\up\down}}{\phi_{b_4,1}^{\up}} \\
        C_4 &= \fcon{\phi_{a_1,0.8}^\down}{\phi_{a_1,0.8}^{\down\up}}&&\nsp{12}=\fcon{\phi_{b_1,1}^{\up\down}}{\phi_{b_1,1}^{\up}}\\
        C_5 &= \fcon{\phi_{a_1,0.2}^\down}{\phi_{a_1,0.2}^{\down\up}}&&\nsp{12}=\fcon{\phi_{a_1,0.4}^\down}{\phi_{a_1,0.4}^{\down\up}}&&\nsp{12}=\fcon{\phi_{a_1,0.6}^\down}{\phi_{a_1,0.6}^{\down\up}}&&\nsp{12}=\fcon{\phi_{b_1,0.2}^{\up\down}}{\phi_{b_1,0.2}^{\up}}\\  
        &=\fcon{\phi_{b_1,0.4}^{\up\down}}{\phi_{b_1,0.4}^{\up}}&&\nsp{12}=\fcon{\phi_{b_1,0.6}^{\up\down}}{\phi_{b_1,0.6}^{\up}}&&\nsp{12}=\fcon{\phi_{b_2,0.2}^{\up\down}}{\phi_{b_2,0.2}^{\up}}&&\nsp{12}=\fcon{\phi_{b_2,0.4}^{\up\down}}{\phi_{b_2,0.4}^{\up}} \\
        &=\fcon{\phi_{b_2,0.6}^{\up\down}}{\phi_{b_2,0.6}^{\up}}&&\nsp{12}=\fcon{\phi_{b_2,0.8}^{\up\down}}{\phi_{b_2,0.8}^{\up}} \\
        C_7 &=\fcon{\phi_{a_2,0.2}^\down}{\phi_{a_2,0.2}^{\down\up}}&&\nsp{12}=\fcon{\phi_{a_2,0.4}^\down}{\phi_{a_2,0.4}^{\down\up}}&&\nsp{12}=\fcon{\phi_{b_3,0.2}^{\up\down}}{\phi_{b_3,0.2}^{\up}}&&\nsp{12}=\fcon{\phi_{b_3,0.4}^{\up\down}}{\phi_{b_3,0.4}^{\up}}
\end{align*}
      \caption{List of fuzzy-attributes and fuzzy-objects which generate the multi-adjoint concepts in Example~\ref{ex.blocktosubc}.}
     \label{tab.genconcept}
 \end{table}
In this case, the subsets of attributes and objects defined from the block $K_1$ according to Proposition~\ref{prop.partition}, are the following:
 \begin{eqnarray*}
 A_1 &=& \{a\in A \mid \fcon{\phi_{a,x}^\down}{\phi_{a,x}^{\down\up}}\in K_1^*, \hbox{ with } x\in L\}=\{a_1\}\\
B_1 &=& \{b\in B \mid \fcon{\phi_{b,y}^{\up\down}}{\phi_{b,y}^{\up}}\in K_1^*, \hbox{ with } y\in L\}=\{b_1,b_2\} 
\end{eqnarray*}
Equivalently, the subsets of attributes and objects defined from the blocks $K_2$ and $K_3$  are given below: 
 \begin{align*}
    A_2 &= \{a_2\}&B_2 &=\{b_3\}\\
    A_3 &= \{a_3\} & B_3 &=\{b_4\}
\end{align*}
It is clear that $\{A_\mu \mid \mu\in\{1,2,3\}\}$ and $\{B_\mu \mid \mu\in\{1,2,3\}\}$ form a partition of the set of attributes and objects, respectively, as Proposition~\ref{prop.partition} states.

Furthermore, we have that if $R(a,b)\neq \bot$, with $a\in A$ and $b\in B$, then there exists $\mu\in \{1,2,3\}$ such that $a\in A_{\mu}$ and $b\in B_{\mu}$, as Lemma~\ref{lem.indices} {showed}.

\qed
\end{example}

Conversely to Proposition~\ref{prop.block} and Theorem~\ref{th.isubc_iblock}, the following proposition determines separable subcontexts from independent blocks of concepts of a decomposition of the concept lattice.

\begin{proposition}\label{prop.Lsep}

Given the multi-adjoint frame $\cL$ and the context $\cC_\n$ whose associated multi-adjoint concept lattice has a decomposition into independent blocks $\{K_\mu\}_{\mu\in\Lambda}$, then the tuple $(A_\mu,B_\mu,R_\mu, \sigma_\mu)$  is a separable subcontext of~$\cC_\n$, for all $\mu\in\Lambda$.
    
\end{proposition}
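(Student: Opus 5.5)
We have to check the four conditions of Definition~\ref{def.Lsep_subc} for $(A_\mu,B_\mu,R_\mu,\sigma_\mu)$, for a fixed $\mu\in\Lambda$. The plan is to lean almost entirely on Proposition~\ref{prop.partition} and Lemma~\ref{lem.indices}, together with Proposition~\ref{prop.xconcept} and the block property from Definition~\ref{block}.

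\textbf{Non-emptiness and properness.} Since $K_\mu$ is a block, $K_\mu^*\neq\varnothing$. I take a concept $\fcon g f\in K_\mu^*$ and write it, via Proposition~\ref{prop:acc}, as a meet of $\wedge$-irreducibles. By Theorem~\ref{th:and:irred} these have the form $\fcon{\phi_{a,x}^\down}{\phi_{a,x}^{\down\up}}$, and at least one of them is different from the top, because $g\neq g_\top$. That meet-irreducible lies in $\up\fcon g f$ and is not the top. It is also not the bottom, since it sits above the non-bottom concept $\fcon g f$. By the block property it therefore belongs to $K_\mu^*$, so $a\in A_\mu$ and $A_\mu\neq\varnothing$. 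Dually, using supremum-density of the object concepts (Theorem~\ref{fundamentalTh}, or the dual of Proposition~\ref{prop:acc}), I get $B_\mu\neq\varnothing$. For properness, the decomposition has at least two blocks. Each is a block, so each $A_\nu$ is non-empty by the same argument. Proposition~\ref{prop.partition} gives disjointness, so $A_\mu\subset A$ and $B_\mu\subset B$.

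\textbf{Conditions three and four.} These follow directly from Lemma~\ref{lem.indices}. Suppose $a\in A_\mu$, $b\in B_\mu^c$ and $R(a,b)\neq\bot$. The lemma yields some $\nu$ with $a\in A_\nu$ and $b\in B_\nu$. Since $\{A_\nu\}$ is a partition and $a\in A_\mu$, we get $\nu=\mu$, so $b\in B_\mu$, a contradiction. Hence $R(a,b)=\bot$ on $A_\mu\times B_\mu^c$. The symmetric argument, using the partition of $B$, gives $R(a',b)=\bot$ on $A_\mu^c\times B_\mu$.

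\textbf{The condition that some $R(a,b)\neq\bot$ with $a\in A_\mu$, $b\in B_\mu$.} I pick $a\in A_\mu$ (non-empty by the first step). Since $\cC_\n$ is normalized, there is $b\in B$ with $R(a,b)\neq\bot$. Lemma~\ref{lem.indices} together with the partition property forces $b\in B_\mu$, exactly as in the previous step.

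\textbf{Main obstacle.} I expect the delicate point to be the non-emptiness of $A_\mu$ and $B_\mu$. It requires making sure that some generating fuzzy-attribute or fuzzy-object concept falls in $K_\mu^*$, rather than only a concept that is a meet of generators lying outside the block. The argument above handles this through the up-/down-closure of blocks. An alternative for $B_\mu$, once $A_\mu\neq\varnothing$ is known, is to reuse the construction in the proof of Lemma~\ref{lem.indices}: for $a\in A_\mu$, choose $b$ with $R(a,b)\neq\bot$. Then $\fcon{\phi_{b,y}^{\up\down}}{\phi_{b,y}^\up}\preceq\fcon{\phi_{a,\top}^\down}{\phi_{a,\top}^{\down\up}}$, and this places $b$ in $B_\mu$ directly.
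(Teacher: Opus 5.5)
Your proof is correct and follows the paper's route. The two vanishing conditions come from Lemma~\ref{lem.indices} combined with the partition of Proposition~\ref{prop.partition}, and the non-zero entry inside $A_\mu\times B_\mu$ comes from normalization followed again by Lemma~\ref{lem.indices}; the paper reaches the same $b_a$ through Proposition~\ref{prop.xconcept}, computing $\phi_{a,\top}^\down(b_a)=R(a,b_a)\neq\bot$. Your additional check that $A_\mu$ and $B_\mu$ are non-empty proper subsets, using the up/down-closure of $K_\mu$ and $|\Lambda|\geq 2$, covers the first condition of Definition~\ref{def.Lsep_subc}, which the paper's proof takes for granted.
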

\begin{proof}

Let us consider the partitions given by Proposition~\ref{prop.partition} associated with an index set $\Lambda$. 
Therefore, given any attribute $a\in A$, there exists $\mu\in \Lambda$ such that $a\in A_\mu$. 
Moreover, by Proposition~\ref{prop.xconcept}, there exists $b_a\in B$ such that $\phi_{a,\top}^\down(b_a)\neq \bot$.
In particular, the following chain of equalities holds:
\begin{align*}
    \phi_{a,\top}^\down(b_a) &=R(a,b_a)\uimp_{\sigma(a,b_a)} \top\\
     &= \max\{x\in L\mid \top \adjoint_{\sigma(a,b_a)} x\leq R(a,b_a)\}\\
     &= R(a,b_a)\neq\bot
\end{align*}
where the first equality is satisfied by Lemma~\ref{lem.fuzzy-attr-obj}, the second one  by Proposition~\ref{prop:atproperties} and the last one holds because $\adjoint_{\sigma(a,b_a)}$ satisfies the boundary condition on the left argument.
{Moreover, by Lemma~\ref{lem.indices}, $b_a\in B_\mu$, since  $R(a,b_a)\neq \bot$.}
Thus, there exist $a\in A_\mu$ and $b_a\in B_\mu$ such that $R(a,b_a)\neq \bot$. 
In order to prove that the tuple $(A_\mu,B_\mu, R_\mu, \sigma_\mu)$ is a separable subcontext, it only remains to show that $R(a,b')=\bot$, for all $(a,b')\in A_\mu\times B_\mu^c$; the proof of $R(a',b)=\bot$, for all $(a',b)\in A_\mu^c\times B_\mu$, is analogous. We will proceed by reductio ad absurdum, we suppose that there exists $b'\in B_\mu^c$ such that $R(a,b')\neq \bot$.
Hence, by Lemma~\ref{lem.indices}, since $a\in A_\mu$ we have that $b'\in B_\mu$ {which is a contradiction.}
Thus, $R(a,b')=\bot$, for all $(a,b')\in A_\mu\times B_\mu^c$. 
Consequently, the tuple $(A_\mu,B_\mu, R_\mu,\sigma_\mu)$ is a separable subcontext of $\cC_\n$. \qed
\end{proof}

{The following result is an extension of the previous one. It shows that when a multi-adjoint concept lattice has a decomposition into independent blocks, it is also possible to obtain a decomposition into independent subcontexts.}

\begin{theorem}\label{th.iblock_isubc}

Given the multi-adjoint frame $\cL$ and the context $\cC_\n$, if $\cM_\n$ has a decomposition into independent blocks, then $\cC_\n$ can be decomposed into independent subcontexts.

\end{theorem}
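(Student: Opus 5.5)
The plan is to take a decomposition into independent blocks $\{K_\mu\}_{\mu\in\Lambda}$ of $\cM_\n$ and show that the family $\{(A_\mu,B_\mu,R_\mu,\sigma_\mu)\mid \mu\in\Lambda\}$, with $A_\mu,B_\mu$ defined from $K_\mu^*$ as above, is a decomposition into independent subcontexts in the sense of Definition~\ref{def.decomp}. The first two conditions are already available from earlier results. Proposition~\ref{prop.Lsep} gives that every $(A_\mu,B_\mu,R_\mu,\sigma_\mu)$ is a separable subcontext. Proposition~\ref{prop.partition} gives that $\{A_\mu\}$ and $\{B_\mu\}$ are partitions of $A$ and $B$, so the unions are $A$ and $B$ and distinct indices give disjoint sets.

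Before using Proposition~\ref{prop.Lsep} I would check two points it relies on. First, each $A_\mu$ should be nonempty. Take $C\in K_\mu^*$ and write it, via Proposition~\ref{prop:acc}, as the meet of the $\wedge$-irreducible concepts above it. Each such irreducible concept is not the top (Definition~\ref{def:irred}) and not the bottom (it lies above $C$), so by Definition~\ref{block} it lies in $K_\mu^*$. By Theorem~\ref{th:and:irred} it has the form $\fcon{\phi_{a,x}^\down}{\phi_{a,x}^{\down\up}}$, which puts $a$ in $A_\mu$. Second, $A_\mu\neq A$, because $K_\mu$ is a proper sublattice and the blocks are independent.

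The remaining, and main, step is the third condition of Definition~\ref{def.decomp}: $\sigma$ must assign conjunctors without zero-divisors on $A_\mu\times B_\mu^c$ and on $A_\mu^c\times B_\mu$. I would argue by contradiction on $A_\mu\times B_\mu^c$; the other case is symmetric, exchanging the roles of fuzzy-attributes and fuzzy-objects. Suppose $a\in A_\mu$, $b\notin B_\mu$, and $x,y\neq\bot$ satisfy $x\adjoint_{\sigma(a,b)}y=\bot$. Then trivially $x\adjoint_{\sigma(a,b)}y\le R(a,b)$. By Corollary~\ref{cor.threpresent},
$$\fcon{\phi_{b,y}^{\up\down}}{\phi_{b,y}^{\up}}\preceq\fcon{\phi_{a,x}^{\down}}{\phi_{a,x}^{\down\up}}.$$
By Proposition~\ref{prop.xconcept}, both concepts differ from the top and bottom of $\cM_\n$. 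If $\fcon{\phi_{a,x}^{\down}}{\phi_{a,x}^{\down\up}}$ lies in $K_\mu^*$, then the block property puts $\fcon{\phi_{b,y}^{\up\down}}{\phi_{b,y}^{\up}}$ in $K_\mu^*$ as well. That means $b\in B_\mu$, which is the desired contradiction.

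The delicate point is that $a\in A_\mu$ only guarantees \emph{some} $x'$ with $\fcon{\phi_{a,x'}^{\down}}{\phi_{a,x'}^{\down\up}}\in K_\mu^*$, not that the particular $x$ works. I would close this gap using monotonicity. Since $\phi_{a,\top}^\down\le\phi_{a,x'}^\down$ and the concept of $\phi_{a,\top}$ is non-extreme by Proposition~\ref{prop.xconcept}, the block property gives $\fcon{\phi_{a,\top}^{\down}}{\phi_{a,\top}^{\down\up}}\in K_\mu^*$. Then $\phi_{a,\top}^\down\le\phi_{a,x}^\down$, together with the non-extremality of the concept of $\phi_{a,x}$, gives $\fcon{\phi_{a,x}^{\down}}{\phi_{a,x}^{\down\up}}\in K_\mu^*$. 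This completes the verification, so $\cC_\n$ is decomposed into independent subcontexts.
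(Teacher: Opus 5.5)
Your proposal is correct and follows essentially the same route as the paper. Partition and separability come from Propositions~\ref{prop.partition} and~\ref{prop.Lsep}, and then a zero-divisor pair in $A_\mu\times B_\mu^c$ leads to a contradiction via Corollary~\ref{cor.threpresent}, Proposition~\ref{prop.xconcept} and the block property. Your detour through $\phi_{a,\top}$, which passes from some $x'$ with $\fcon{\phi_{a,x'}^{\down}}{\phi_{a,x'}^{\down\up}}\in K_\mu^*$ to the given $x$, makes explicit a step the paper's proof covers only with ``since $a\in A_\mu$''. Your check that $A_\mu$ is nonempty likewise fills in something the paper leaves implicit.
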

\begin{proof}

First of all, by Proposition~\ref{prop.partition}, we have that if $\cM_\n$ has a decomposition into independent blocks there exists a partition of the set of
attributes and the set of objects associated with an index set $\Lambda$. In addition,  by Proposition~\ref{prop.Lsep}, we know that  $(A_\mu, B_\mu,R_\mu,\sigma_\mu)$ is a separable subcontext of $\cC_\n$, for all $\mu\in \Lambda$. Consequently, according to Definition~\ref{def.decomp}, it only remains to prove that  
$\sigma$ associates conjunctors with no zero-divisors in $A_\mu\times B_\mu^c$ (the proof for $A_\mu^c\times B_\mu$ follows analogously), for all $\mu\in \Lambda$.
Let us proceed by reductio ad absurdum. Suppose that there exists $\mu\in\Lambda$ such that $\sigma$ associates a conjunctor with zero-divisors to a pair $(a,b_0)\in A_\mu\times B_\mu^c$.
Hence, there exist $x,y\in L\setminus\{\bot\}$ such that $x\adjoint_{\sigma(a,b_0)} y =\bot$. Notice that $x$ and $y$ cannot be $ \top $, since we get a contradiction with the boundary condition. {In addition, by Corollary~\ref{cor.threpresent}, the inequality $\phi_{a,x}^{\down\up} \leq \phi_{b_0,y}^\up$ holds}
and, by Proposition~\ref{prop.xconcept},  $f_\top \neq \phi_{b_0,y}^\up \neq f_\bot$. Then, by Definition~\ref{block} and since $a\in A_\mu$, we have that $\fcon{\phi_{b_0,y}^{\up\down}}{\phi_{b_0,y}^{\up}}\in K^*_\mu$ which contradicts the fact of independent blocks, since $b_0\in B^c_\mu$ and so $b_0$ belongs to another block different from $K_\mu$.
Therefore, 
$\sigma$ cannot associate a conjunctor of $\cL$ with zero-divisors in $A_\mu\times B_\mu^c$. 

Consequently,  $\{(A_\mu,B_\mu, R_\mu, \sigma_\mu) \mid \mu\in\Lambda\}$ is a decomposition into independent subcontexts of the context $\cC_\n$. \qed

\end{proof}

The last result of the paper is a direct consequence of Theorem~\ref{th.isubc_iblock} and Theorem~\ref{th.iblock_isubc}.

\begin{corollary}\label{cor.ibloc-isubc}
Given the multi-adjoint frame $\cL$ and the context $\cC_\n$, then the following statements are equivalent:
\begin{itemize}
	\item $\cC_\n$ has a decomposition into independent subcontexts.
	\item $\cM_\n$ has a decomposition into independent blocks.
\end{itemize}
\end{corollary}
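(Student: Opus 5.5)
The plan is to obtain the equivalence by combining the two directions already established, in the same way as the paper announces it.

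For the first implication, assume that $\cC_\n$ has a decomposition into independent subcontexts $\{(A_\lambda, B_\lambda, R_\lambda, \sigma_\lambda)\mid \lambda\in\Lambda\}$. Theorem~\ref{th.isubc_iblock} applies directly. It gives the family $\{K_\lambda\}_{\lambda\in\Lambda}$ with
$$
K_\lambda = \{ \fcon{g}{f}\in\cM_\n \mid \fcon g f = \bigwedge M_g^{A_\lambda}\}\cup\{ \fcon{g_\top}{f_\bot}, \fcon{g_\bot}{ f_\top}\},
$$
and these sets are complete blocks by Proposition~\ref{prop.block}. They are pairwise independent and their union is $\cM_\n$, so $\cM_\n$ has a decomposition into independent blocks. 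The only thing to check is that the standing hypotheses are satisfied: the frame $\cL$ has the boundary condition, the context is normalized, and the ascending chain condition holds. These are exactly the assumptions fixed in Sections~\ref{sec.Indep-subc} and~\ref{sec.blocks-sci}.

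For the converse, assume that $\cM_\n$ has a decomposition into independent blocks $\{K_\mu\}_{\mu\in\Lambda}$. Theorem~\ref{th.iblock_isubc} yields a decomposition of $\cC_\n$ into independent subcontexts, built as follows:
\begin{enumerate}
\item Proposition~\ref{prop.partition} constructs the partitions $\{A_\mu\}$ and $\{B_\mu\}$ of $A$ and $B$.
\item Proposition~\ref{prop.Lsep} shows that each $(A_\mu,B_\mu,R_\mu,\sigma_\mu)$ is a separable subcontext.
\item The absence of zero-divisors on $A_\mu\times B_\mu^c$ and on $A_\mu^c\times B_\mu$ follows from Corollary~\ref{cor.threpresent} together with the independence of the blocks.
\end{enumerate}

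Both implications therefore reduce to previously proven theorems, so the corollary needs no new argument. The only possible subtlety is not mathematical. The objects called ``decomposition'' must be the same notions on both sides: Definition~\ref{def.decomp} for contexts, and Definition~\ref{b_independent} for lattices. Both theorems use these exact definitions, so the equivalence follows immediately.
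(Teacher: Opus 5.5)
Your proposal is correct and follows the paper's route exactly. The paper also obtains the corollary as a direct combination of Theorem~\ref{th.isubc_iblock} (independent subcontexts give independent blocks) and Theorem~\ref{th.iblock_isubc} (independent blocks give independent subcontexts), under the same standing assumptions of boundary condition, normalized context and ascending chain condition that you check.
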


Finally, we come back and continue with Example~\ref{ex.blocktosubc} in order to illustrate these last results.
\begin{example}\label{ex.final}
    We have that $\{A_\mu \mid \mu\in\{1,2,3\}\}$ and $\{B_\mu \mid \mu\in\{1,2,3\}\}$ are the partitions of the set of attributes and objects obtained in Example~\ref{ex.blocktosubc}. Let us consider the subsets $A_1$ and $B_1$. Observing the fuzzy relation $R$ in Table~\ref{tab.Rdecomp}, we can verify that the tuple $(A_1, B_1, R_{A_1\times B_1}, \sigma_{A_1\times B_1})$ is a separable subcontext of $(A,B,R,\sigma)$ as Proposition~\ref{prop.Lsep} states. Moreover, it {is} easy to check that the tuples $(A_2, B_2, R_{A_2\times B_2}, \sigma_{A_2\times B_2})$ and $(A_3, B_3, R_{A_3\times B_3}, \sigma_{A_3\times B_3})$ are also separable subcontexts.

    \begin{table}[!ht]
    \begin{minipage}{0.5\textwidth}
        \begin{center}
        \begin{tabular}{|c|c||cc||c||c|}
            \hline
            \multicolumn{2}{|c||}{} & \multicolumn{2}{|c||}{$B_1$}& $B_2$ & $B_3$ \\ \cline{3-6}
            \multicolumn{2}{|c||}{$R$} & $b_1$ & $b_2$ & $b_3$ & $b_4$ \\ \hline\hline
            $A_1$ &$a_1$& 0.6 & 0.8 & 0 & 0\\\hline\hline
            $A_2$ &$a_2$ & 0 & 0 & 0.4 & 0 \\\hline\hline
            $A_3$ &$a_3$ & 0 & 0 & 0 & 1\\
            \hline
        \end{tabular}
    \end{center}
    \end{minipage}
    \begin{minipage}{0.5\textwidth}
        \begin{center}
        \begin{tabular}{|c|c||cc||c||c|}
            \hline
            \multicolumn{2}{|c||}{} & \multicolumn{2}{|c||}{$B_1$}& $B_2$ & $B_3$ \\ \cline{3-6}
            \multicolumn{2}{|c||}{$\sigma$} & $b_1$ & $b_2$ & $b_3$ & $b_4$ \\ \hline\hline
            $A_1$ &$a_1$& $\adjoint^*_\G$ & $\adjoint^*_\Lu$ & $\adjoint^*_\G$ & $\adjoint^*_\G$\\\hline\hline
            $A_2$ &$a_2$ & $\adjoint^*_\G$ & $\adjoint^*_\G$ & $\adjoint^*_\G$ & $\adjoint^*_\G$ \\\hline\hline
            $A_3$ &$a_3$ & $\adjoint^*_\G$ & $\adjoint^*_\G$ & $\adjoint^*_\G$ & $\adjoint^*_\G$\\
            \hline
        \end{tabular}
    \end{center}
    \end{minipage}
    \caption{Fuzzy relation $R$ and the mapping $\sigma$ of context $(A,B,R,\sigma)$ of Example~\ref{ex.final}.}\label{tab.Rdecomp}
\end{table}	

In addition,   we can see in Table~\ref{tab.Rdecomp} that the mapping $\sigma$ does not assign conjunctors with zero-divisors to the pairs $A_\mu\times B^c_\mu$ and $A^c_\mu\times B_\mu$, for all $\mu\in\{1,2,3\}$. Therefore, as Theorem~\ref{th.iblock_isubc} states, the set $\{(A_\mu, B_\mu, R_\mu, \sigma_\mu) \mid \mu\in\{1,2,3\}\}$ is a decomposition into independent subcontexts of $(A,B,R,\sigma)$.

Finally, as Corollary~\ref{cor.ibloc-isubc} claims $\{(A_\mu, B_\mu, R_\mu, \sigma_\mu) \mid \mu\in\{1,2,3\}\}$ is a decomposition into independent subcontexts of $(A,B,R,\sigma)$ if and only if $\{K_\mu \mid \mu\in\{1,2,3\}\}$ is a decomposition into independent blocks of the multi-adjoint concept lattice. 
\qed
\end{example}

\section{Conclusions and future work}\label{conclusion}

This paper has started with the notion of block of elements of a general bounded lattice. Different properties have been studied, such as they can decompose the given lattice. In particular, we have proved that minimal blocks are independent blocks and the existence of one block implies the existence of a decomposition of the lattice. These properties are key to study the existence of independent subcontexts of a given context. Before that, this last notion has been formally introduced together with some properties in a particular multi-adjoint framework. Based on this definition we have analyzed the close existing relationship between independent subcontexts and blocks in the multi-adjoint concept lattice. As a consequence of this study, we  have  provided a characterization of the contexts that contain independent subcontexts by means of blocks of the associated multi-adjoint concept lattice. This fact will allow to lay the foundations to the decomposition of contexts in the multi-adjoint paradigm.

In~\cite{Konecny2016}, ``block relations'' in formal fuzzy concept analysis was introduced with a clear different meaning from the notion of ``block of concepts'' introduced in this paper. A detail relationship will be given in the future. Furthermore,
 we will extend these results to more general multi-adjoint frameworks. In addition, we will develop a decomposition mechanism to compute either a decomposition into independent subcontexts of a given context or a decomposition into independent blocks of a given multi-adjoint concept lattice. We are also interested in applying the obtained results to decompose real databases.

\end{document}